\documentclass[conference]{IEEEtran}
\IEEEoverridecommandlockouts
\usepackage{cite}
\usepackage[hyphens]{url}
\usepackage[hidelinks]{hyperref}
\usepackage{graphicx}
\usepackage{subcaption}
\usepackage{soul}

\usepackage{booktabs}
\usepackage{textcomp}
\usepackage{xcolor}
\usepackage{tikz}
\usepackage{siunitx}
\usepackage{array}
\usepackage{multirow}
\usepackage{longtable}
\usepackage{adjustbox}
\usepackage{eso-pic}

\usepackage{amsmath,amsfonts,amsthm,amssymb}
\usepackage{enumitem}
\usepackage{algorithm}
\usepackage[noend]{algpseudocode}

\usepackage[most]{tcolorbox} 

\definecolor{commentbg}{RGB}{245,245,245}
\definecolor{pointcolor}{RGB}{0,0,128}

\newtcolorbox{reviewcomment}[1][]{
    colback=commentbg,
    colframe=commentbg,
    boxrule=0pt,
    arc=3pt,
    left=10pt, right=10pt, top=10pt, bottom=10pt,
    breakable,
    fonttitle=\bfseries\color{black},
    #1
}

\theoremstyle{plain}
\newtheorem{thm}{Theorem}

\newtheorem{lem}[thm]{Lemma}
\newtheorem{cor}{Corollary}[thm]

\theoremstyle{definition}

\newcommand{\IEEEArxivCopyrightNotice}{%
  \textcopyright{} 2026 IEEE.
  Personal use of this material is permitted.
  Permission from IEEE must be obtained for all other uses,
  in any current or future media, including reprinting/republishing
  this material for advertising or promotional purposes,
  creating new collective works, for resale or redistribution
  to servers or lists, or reuse of any copyrighted component
  of this work in other works.
}

\def\BibTeX{{\rm B\kern-.05em{\sc i\kern-.025em b}\kern-.08em
    T\kern-.1667em\lower.7ex\hbox{E}\kern-.125emX}}

\begin{document}

\AddToShipoutPictureFG*{%
  \AtPageLowerLeft{%
    \raisebox{5mm}{%
      \makebox[\paperwidth][c]{%
        \parbox{0.94\paperwidth}{%
          \centering
          \fontsize{6}{7}\selectfont
          \IEEEArxivCopyrightNotice
        }%
      }%
    }%
  }%
}


\twocolumn
\setcounter{page}{1}

\title{ABC: Numerical Data Collection under \\ Local Differential Privacy without Prior Knowledge
\thanks{This research was supported by Basic Science Research Program through the National Research Foundation of Korea(NRF) funded by the Ministry of Education(RS-2021-NR060143, RS-2026-25494942) and the Institute of Information \& Communications Technology Planning \& Evaluation(IITP)-ICT Creative Consilience Program grant funded by the Korea government(MSIT)(IITP-2026-RS-2020-II201819)}}
\author{\IEEEauthorblockN{1\textsuperscript{st} Incheol Baek}
\IEEEauthorblockA{\textit{Computer Science and Engineering} \\
\textit{Korea University}\\
Seoul, Republic of Korea \\
inch307@korea.ac.kr}
\and
\IEEEauthorblockN{2\textsuperscript{nd} Hyungbin Kim}
\IEEEauthorblockA{\textit{Computer Science and Engineering} \\
\textit{Korea University}\\
Seoul, Republic of Korea \\
hyungbinkim@korea.ac.kr}
\and
\IEEEauthorblockN{3\textsuperscript{rd} Yon Dohn Chung}
\IEEEauthorblockA{\textit{Computer Science and Engineering} \\
\textit{Korea University}\\
Seoul, Republic of Korea \\
ydchung@korea.ac.kr}
}

\maketitle

\AddToShipoutPictureFG*{%
  \AtPageUpperLeft{%
    \raisebox{-10mm}[0pt][0pt]{%
      \makebox[\paperwidth][c]{%
        \footnotesize\itshape
        Accepted at the 42nd IEEE International Conference on Data Engineering (ICDE 2026).%
      }%
    }%
  }%
}

\begin{abstract}
Local Differential Privacy (LDP) provides strong privacy guarantees for collecting numerical data. A fundamental challenge, however, is that existing LDP mechanisms require a predefined data domain, which is often unknown in practice. This lack of prior knowledge creates a critical dilemma for the data collector: if the chosen domain is too narrow, values outside the range are clipped, leading to information loss. Conversely, if the domain is too wide, excessive noise is added during the privatization process, which degrades the quality of collected data. This highlights the need for methods that can dynamically estimate the data domain.

In this work, we propose an adaptive LDP framework that addresses this problem. In our method, each user sends two pieces of information: their perturbed numerical data, and a privatized signal indicating if their original value was clipped by the current domain. By aggregating these signals, our proposed method, \textit{Adaptive Bounding of Clipping regions (ABC) method}, iteratively adjusts the domain to fit the underlying data distribution without prior knowledge. Our theoretical analysis shows that the estimated data domain converges to an appropriate range.

In the empirical evaluation, the results demonstrate that our framework significantly improves the quality of numerical data collection across various datasets and underlying LDP mechanisms. We also show that the estimated range successfully converges in practice and our approach is robust to its hyperparameters through comprehensive ablation studies.
\end{abstract}
\begin{IEEEkeywords}
Local Differential Privacy, Data collection, Privacy-preserving mechanism
\end{IEEEkeywords}



\section{Introduction}

In this digital era, the spread of connected devices, from personal computers and smartphones to a vast ecosystem of Internet of Things (IoT) devices, generates large volume of data at a rapid pace. Such data hold valuable insights into device states \cite{okmi2023mobile, blondel2015survey, zinman2020utilizing}, user behaviors \cite{paul2016smartbuddy, ahmad2016defining, jalali2016human}, and sensor measurements \cite{djedouboum2018big, wang2010networked, kerle2008real}. However, the direct collection of such raw data poses significant privacy risks by exposing sensitive information about individuals \cite{wang2020comprehensive, xiong2020comprehensive}.

Local Differential Privacy (LDP) \cite{kasiviswanathan2011can} has emerged as a promising solution to this privacy challenge. LDP has been widely adopted by major technology companies, such as Apple \cite{Apple}, Microsoft \cite{ding2017collecting}, and Google \cite{erlingsson2014rappor}, for privacy-preserving data collection. Under LDP, each user perturbs their data locally before transmission to a central server. This ensures that the server never accesses raw, sensitive information, while still allowing for the estimation of aggregate statistics from the collected data. The level of this privacy-preserving noise is controlled by a parameter, $\epsilon$, known as the privacy budget. A smaller $\epsilon$ provides stronger privacy guarantees but reduces data utility, while a larger $\epsilon$ yields better utility with weaker privacy.

The functions that perform this perturbation are known as LDP mechanisms. While these mechanisms have been developed for various data types \cite{bassily2015local, wang2019consistent, wang2019locally, fanti2016building, qin2016heavy, nguyen2016collecting}, this paper focuses on numerical data \cite{duchi2018minimax, baek2025noutput, zhao2020local, wang2019collecting}, which is important for many applications. Existing mechanisms for numerical LDP are designed to perturb a user's value while preserving its statistical utility, allowing a server to collect these perturbed values and to employ them in data analysis. 

However, they assume that the data domain is known in advance. This assumption is often unrealistic in practice, where a data collector often has very little or no prior information about the range of data. This lack of knowledge creates a fundamental dilemma. A domain set too narrowly results in data clipping, which introduces systematic bias and information loss. Conversely, a domain set too widely forces the addition of excessive noise to protect the entire range, which inflates the variance of the final estimate and degrades utility.

This critical dependence on prior knowledge severely limits the practical applicability of existing LDP mechanisms. For numerical LDP to be a truly viable and reliable tool in real-world settings, methods that can operate without this assumption are not just beneficial, but essential. To address this problem, we propose an adaptive mechanism, which we call the \textbf{Adaptive Bounding of Clipping regions (ABC)} method, that adaptively bounds values using second-order optimization. In our protocol, each user sends their numerically perturbed data and a privatized signal indicating if their original value was clipped by the current domain. By aggregating these signals, the server iteratively adjusts the domain to fit the underlying data distribution. Since our method deals solely with domain adjustment, it can be seamlessly integrated with existing LDP mechanisms.

Our key contributions are summarized as follows:
\begin{itemize}[leftmargin=*]
    \item We propose a novel adaptive framework, the Adaptive Bounding of Clipping regions (ABC), for numerical data collection under LDP. A key advantage of our framework is its ability to operate effectively without requiring prior knowledge of the data domain.
    \item We provide a rigorous theoretical analysis of our method, including a Lyapunov-based proof guaranteeing the convergence of our adaptive update function.
    \item We conduct experiments on various real-world and synthetic datasets. The results demonstrate that our ABC framework significantly outperforms existing methods in terms of data utility and exhibits robustness to hyperparameter settings.
\end{itemize}


\section{Related Work}

Our work addresses the utility limitations of existing LDP mechanisms for numerical data collection by introducing a novel adaptive method. To provide context for our contribution, we first review the foundational and advanced LDP mechanisms for numerical data.

\subsection{Existing LDP Mechanisms}
There are various LDP mechanisms for numerical data, which can be characterized by their output space, $\Omega$. The output space is the set of all possible values a mechanism can produce after perturbation. The size of this space, $|\Omega|$, is a key factor that influences the utility of the mechanism. Typically, most of existing LDP mechanisms focus on mean estimation.

Duchi's mechanism \cite{duchi2018minimax} is a pioneering approach for continuous numerical data. It normalizes a user's value to a [-1, 1] range and maps it to one of two possible outputs with a probability dependent on the original value. While foundational, its utility can be limited by its binary output structure, where $|\Omega|=2$.

Three-output mechanism \cite{zhao2020local}, in contrast, was designed to improve utility. By expanding the output space to $|\Omega|=3$ with an additional output, it reduces estimation variance. This reduction is especially beneficial in small privacy budget ($\epsilon$) settings.

Piecewise mechanism \cite{wang2019collecting} (PM) is an advanced method that partitions the input domain [-1, 1] into several segments and applies non-uniform noise. It allocates a higher probability for the perturbed output to fall into the same segment as the original input. Since the output is drawn from a continuous range, its output space is infinite, $|\Omega|=\infty$.

A generalized form of the PM was proposed by Zhao et al.~\cite{zhao2020local}. This form can be configured to instantiate specific variants, such as the original PM or a Suboptimal PM (PM-SUB). While a theoretically optimal version of PM exists, its implementation is complex. In practice, these suboptimal versions are widely used as their empirical performance is nearly identical to that of the optimal version.

N-output mechanism \cite{baek2025noutput} is a generalized LDP mechanism that maps an input value to one of $N$ discrete outputs, defining an output space of $|\Omega|=N$. This framework formulates the mechanism's design as a solvable optimization problem, allowing it to be optimized for any arbitrary, finite number of outputs. Consequently, it has been shown to achieve high utility for various estimation tasks.

\textbf{Distribution Estimation.} 
Another work focuses on estimating the entire data distribution under LDP \cite{li2020estimating, baek2025noutput}. These mechanisms collect numerical data and estimate fine-grained distribution employing Expectation Maximization algorithm \cite{dempster1977maximum}. However, these mechanisms also suppose a known data domain, meaning our ABC framework could similarly serve as a complementary enhancement for them.

\subsection{Bounding methods} \label{sec:related_work_bounding}
\textbf{Adaptive Clipping in DP.} Adaptive clipping has been extensively studied in Central Differential Privacy and Federated Learning to balance bias and noise. Existing methods like DPLAC \cite{andrew2021differentially} and DCSGD \cite{wei2025dc} dynamically adjust clipping thresholds based on quantile estimation or gradient norm distributions. However, these methods typically rely on the server having access to securely aggregated statistics to compute these thresholds. In a strict LDP setting, where the server observes only individually perturbed noisy values, these reliable indicators are unavailable. Consequently, directly applying these CDP-based strategies to LDP often results in suboptimal convergence, highlighting the need for a dedicated LDP-specific solution.

\textbf{Clipping in LDP.} TOPL \cite{wang2021continuous} employs a two-phase sample splitting strategy. In the first phase, it utilizes a portion of the user to estimate the distribution of the data's magnitude, analytically deriving a clipping threshold $\theta$ that minimizes the expected MSE. This fixed threshold is then applied to the remaining users for the primary estimation. However, ToPL relies on a one-time estimation; if the initial density approximation is inaccurate due to LDP noise, the chosen bounds remain suboptimal for the entire process without any opportunity for correction.

\textbf{Our position.} A shared, critical limitation of these mechanisms is their reliance on the assumption that the data domain $[l, r]$ is known in advance. This is often unrealistic in real-world applications, as the data domain is rarely known. An incorrect choice of bounds leads to severe information loss due to excessive clipping (if the range is too narrow) or wasted privacy budget and high variance (if the range is too wide), both of which significantly degrade data utility.

Our work is to directly address this practical, yet largely overlooked, gap in the LDP literature. We introduce an iterative framework that dynamically discovers appropriate data bounds under LDP. Therefore, our method is not a replacement for existing LDP mechanisms but rather an orthogonal and complementary enhancement. It functions as a foundational layer that enables LDP mechanisms like PM or N-output to operate effectively in realistic settings where the domain is unknown. By doing so, our work bridges the gap between the numerical LDP mechanisms and their practical application, significantly improving the robustness and utility of private data collection.
\section{Preliminaries}

In this section, we introduce the concept of LDP and the method for collecting numerical data in an LDP environment.

\subsection{Local Differential Privacy}
LDP provides a strong privacy guarantee by having each user randomize their data locally before transmission. Unlike centralized differential privacy \cite{dwork2006differential}, which requires a trusted server to hold original data, individuals' data remain on their own devices. This ensures that the server or any external party never accesses an individual's sensitive raw data.

The privacy guarantee level of an LDP mechanism is quantified by a non-negative parameter, the privacy budget $\epsilon$. A smaller $\epsilon$ value provides stronger privacy by making the outputs for different inputs harder to distinguish, but it adds more noise, which reduces data utility. Conversely, a larger $\epsilon$ value provides better utility with a weaker privacy guarantee. Formally, a randomized mechanism $\mathcal{M}$ satisfies $\epsilon$-LDP if, for any two input values $v_1$, $v_2$ and for any possible output $y$, the following inequality holds:

\begin{equation}
\Pr[\mathcal{M}(v_1) = y] \le e^\epsilon \cdot \Pr[\mathcal{M}(v_2) = y]
\end{equation}

This definition ensures user privacy by making it difficult for an adversary to infer whether the original input was $v_1$ or $v_2$, even after observing the output $y$.

\subsection{Numerical Data Collection in LDP}
For clarity, our discussion focuses on the one-dimensional case following the existing LDP mechanisms \cite{duchi2018minimax, zhao2020local, wang2019collecting, baek2025noutput}. The standard framework for collecting numerical data under LDP is a three-step process performed by each user: clipping, normalization, and perturbation. First, a user's raw numerical value $v$ is clipped to a predefined public range $[l,r]$. Any value smaller than $l$ is set to $l$, and any value larger than $r$ is set to $r$.
\begin{equation} \label{eq:clip}
v_{\text{clip}} = \max(l, \min(r, v))
\end{equation}
Second, the clipped value $v_{\text{clip}}$ is normalized to a standardized domain, typically $[-1,1]$.
\begin{equation}
v' = \frac{2(v_{\text{clip}} - l)}{r-l} - 1
\end{equation}
Finally, a randomized LDP mechanism $\mathcal{M}$ is applied to the normalized value $v'$ to generate a perturbed output $\tilde{v}$. This perturbed value is then sent to the data collector.
\begin{equation}
\tilde{v} = \mathcal{M}(v')
\end{equation}
The mechanism $\mathcal{M}: [-1, 1] \to Y$ is a randomized function that satisfies $\epsilon$-LDP, where $Y$ is the output space. LDP mechanisms allow the data collector to construct an unbiased estimator for the original statistics. The collector aggregates the perturbed outputs $\tilde{v}$ from numerous users and, knowing the public range $[l,r]$, performs an inverse transformation to compute statistics like the mean. While each individual value is noisy, the aggregation process can recover the true statistics with some error.

While this paper focuses on the univariate case, the proposed method can be extended to high-dimensional settings using dimension-sampling \cite{zhao2020local, wang2019collecting, baek2025noutput}. In this technique, each user privatizes and reports a value for only one randomly selected dimension, thereby conserving the total privacy budget.

\subsection{Generalized Randomized Response}
While there are numerous LDP mechanisms for categorical data collection \cite{wang2017locally, erlingsson2014rappor,warner1965randomized}, a standard LDP mechanism for $d$-dimension categorical data is Generalized Randomized Response (GRR) \cite{wang2017locally}. In GRR, a user with a true category $c$ reports it with probability $p$, and reports a random category (uniformly chosen from all $d$ categories) with probability $1-p$. The probabilities $p$ and $q$ are defined as:
\begin{equation}
    p = \frac{e^\epsilon}{e^\epsilon + d - 1} \quad \text{and} \quad q = \frac{1}{e^\epsilon + d - 1}
\end{equation}
After collecting $N$ perturbed reports, the server obtains a noisy count for each category, denoted as $N'_c$. The server then applies the GRR correction formula to obtain an unbiased estimate of the true count, $\hat{N}_c$:
\begin{equation} \label{eq:grr_correction}
    \hat{N}_c = \frac{N'_c - N \cdot q}{p - q}
\end{equation}
Finally, the estimated ratio for each category, $\hat{\theta}_c$, is computed by normalizing this count: $\hat{\theta}_c = \hat{N}_c / N$.



\section{Problem Formulation and Analysis} \label{sec:problem_definition}

When the server collects numerical data $v$ from the clients, the primary challenge in the standard LDP framework is the assumption that a suitable data domain $[l, r]$ is known in advance. However, in many real-world scenarios, the server has no prior knowledge of the data's true range, which we denote as $[l^*, r^*]$. Thus, the server may guess the appropriate data domain,  but if the chosen range is too narrow, a significant fraction of data is clipped, resulting in a biased estimate. Conversely, if the range is too wide, the noise added by the LDP mechanism becomes excessive relative to the signal, which inflates the variance of the estimate.

The problem we address is: How can a server tune the bounds $[l, r]$ to converge towards a proper range that minimizes the total estimation error, without prior knowledge?

To motivate our approach, we first quantitatively analyze the two opposing sources of error an adaptive algorithm must balance.The performance of numerical LDP mechanisms is commonly evaluated by the error in mean estimation, measured using the Mean Squared Error (MSE). The MSE quantifies the total error of an estimator, defined as the expected squared difference between the estimated mean $\hat{\mu}$ of the perturbed data, and the true mean $\mu$ of the original data. The MSE of an estimator $\hat{\mu}$ is defined as $\text{MSE}(\hat{\mu}) = \mathbb{E}[(\hat{\mu} - \mu)^2]$, which can be decomposed into two components:
\begin{equation} \label{eq:mse}
    \text{MSE}(\hat{\mu}) = \text{Bias}(\hat{\mu})^2 + \text{Var}(\hat{\mu})
\end{equation} 
Bias represents the systematic error of the estimator (how far the average estimate is from the true value), while variance represents the random error (how much the estimates fluctuate).

\begin{lem}[Clipping-induced bias] \label{lem:bias}
The MSE of the estimator $\hat{\mu}$ is lower-bounded by the squared bias, a function of the out-of-bounds probability mass $P_{\text{out}}$ and magnitude $\delta_{\text{out}}$:
$$\text{MSE}(\hat{\mu}) \ge \text{Bias}(\hat{\mu})^2 = \Omega((P_{\text{out}} \cdot \delta_{\text{out}})^2)$$
\end{lem}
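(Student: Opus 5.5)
The first inequality follows directly from the decomposition \eqref{eq:mse}: since $\text{Var}(\hat{\mu}) \ge 0$, we get $\text{MSE}(\hat{\mu}) \ge \text{Bias}(\hat{\mu})^2$. The substantive part is computing the bias and showing it scales like $P_{\text{out}}\cdot\delta_{\text{out}}$. The plan is to use the unbiasedness of the underlying mechanism $\mathcal{M}$ with respect to its own (clipped, normalized) input. This isolates the bias entirely in the clipping step \eqref{eq:clip}.

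First I would make the quantities precise. Let $V$ have distribution $F$. Split the out-of-range mass into $P_{\text{low}}=\Pr[V<l]$ and $P_{\text{high}}=\Pr[V>r]$, with $P_{\text{out}}=P_{\text{low}}+P_{\text{high}}$. Define the average overshoot
\[
\delta_{\text{out}} = \mathbb{E}\bigl[\,|V - v_{\text{clip}}| \;\big|\; V\notin[l,r]\,\bigr].
\]
The server inverts the normalization to recover $\hat{\mu}$. Because $\mathcal{M}$ is unbiased for $v'$ and the normalization is affine, we get $\mathbb{E}[\hat{\mu}] = \mathbb{E}[v_{\text{clip}}]$. Hence
\[
\text{Bias}(\hat{\mu}) = \mathbb{E}[v_{\text{clip}}-V] = P_{\text{low}}\,\mathbb{E}[l-V\mid V<l] - P_{\text{high}}\,\mathbb{E}[V-r\mid V>r].
\]

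Next I would lower bound the magnitude of this expression. The main obstacle is that the two one-sided terms have opposite signs and could cancel, for example under symmetric clipping of a symmetric distribution. Without an extra assumption, a clean $\Omega(P_{\text{out}}\delta_{\text{out}})$ bound cannot hold in general. I would handle this in one of two ways:
\begin{enumerate}[label=(\roman*)]
\item State the result per side. Take the one-sided case, e.g.\ only $r<r^*$ is binding, or assume one side's contribution dominates by a constant factor. Then $|\text{Bias}| = P_{\text{high}}\,\mathbb{E}[V-r\mid V>r] = P_{\text{out}}\delta_{\text{out}}$ exactly.
\item Define $P_{\text{out}}\delta_{\text{out}}$ as the net signed clipped mass, $|\mathbb{E}[(V-v_{\text{clip}})]|$. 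Then the bias equals it by definition.
\end{enumerate}
I would adopt (i) with a constant $c\in(0,1]$ capturing non-cancellation. This gives $|\text{Bias}|\ge c\,P_{\text{out}}\delta_{\text{out}}$.

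Finally I would square the bound to obtain $\text{Bias}(\hat{\mu})^2 = \Omega((P_{\text{out}}\delta_{\text{out}})^2)$ and combine it with the first step. I would also remark that this bias does not shrink with the number of users $n$, unlike the variance term. This contrast motivates the adaptive domain adjustment that follows.
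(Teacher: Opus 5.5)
Your proposal follows the paper's proof essentially step for step: $\text{Var}(\hat{\mu})\ge 0$ gives the first inequality, unbiasedness of $\mathcal{M}$ plus the affine de-normalization gives $\mathbb{E}[\hat{\mu}]=\mathbb{E}[v_{\text{clip}}]$, and the bias is written as the sum of the two one-sided clipping integrals. You are right to go further at the last step: the paper only asserts that this sum ``is proportional to'' $P_{\text{out}}\cdot\delta_{\text{out}}$ and overlooks that the left term is nonnegative while the right term is nonpositive (a symmetric distribution with bounds symmetric about its center has zero bias even though $P_{\text{out}}\delta_{\text{out}}>0$), so the $\Omega$ claim really does need your one-sided/non-cancellation constant $c$ or a signed definition of $\delta_{\text{out}}$, which makes your version the more rigorous one.
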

\begin{proof}
Since variance is non-negative, $\text{Var}(\hat{\mu}) \ge 0$, it directly follows that $\text{MSE}(\hat{\mu}) \ge \text{Bias}(\hat{\mu})^2$.
The bias is the difference between the expected value of the estimator and the true mean, $\text{Bias}(\hat{\mu}) = \mathbb{E}[\hat{\mu}] - \mu$. An unbiased LDP mechanism provides an estimate for the mean of the pre-perturbation data, which in our case is the clipped data, $v_{\text{clip}}$. Thus, $\mathbb{E}[\hat{\mu}] = \mathbb{E}[v_{\text{clip}}]$. The bias is therefore:
\begin{equation}
    \text{Bias}(\hat{\mu}) = \mathbb{E}[v_{\text{clip}}] - \mathbb{E}[v] = \mathbb{E}[v_{\text{clip}} - v]
\end{equation}
This expectation is non-zero only for values outside the range $[l, r]$. Let $p(v)$ be the probability density function of the data. The bias can be expressed as an integral over the out-of-bounds regions:
\begin{equation} \label{eq:bias}
    \text{Bias}(\hat{\mu}) = \int_{-\infty}^{l} (l - v) p(v) \,dv + \int_{r}^{\infty} (r - v) p(v) \,dv
\end{equation}
This expression is proportional to the probability mass of the clipped data ($P_{\text{out}}$) and the average magnitude of the clipping ($\delta_{\text{out}}$), which concludes the proof.
\end{proof}
Lemma \ref{lem:bias} reveals that any data falling outside the range $[l, r]$ introduces a systematic bias, which establishes a hard floor on the achievable MSE. The term $P_{\text{out}} \cdot \delta_{\text{out}}$ represents the total impact of clipping-induced error. This highlights that the overall bias is a product of both the proportion of clipped data ($P_{\text{out}}$) and its average distance from the boundary ($\delta_{\text{out}}$). This bias pulls the final estimate away from the true mean, an error that cannot be reduced by simply collecting more data. Consequently, this implies widening its bounds minimizes clipping and reduces this systematic error.

\begin{lem}[Range-induced variance inflation] \label{lem:variance}
The variance of the mean estimator $\hat{\mu}$ is quadratically proportional to the chosen range width $W=r-l$. Compared to the variance of an optimal estimator $\hat{\mu}^*$ obtained with an optimal width $W^*$, the inflation factor is:
$$\frac{\text{Var}(\hat{\mu})}{\text{Var}(\hat{\mu}^*)} = \left(\frac{W}{W^*}\right)^2$$
\end{lem}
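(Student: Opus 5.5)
The plan is to express the variance of $\hat{\mu}$ through the normalization map and then use the fact that every mechanism $\mathcal{M}$ acts only on the normalized domain $[-1,1]$.

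First, write the server's estimator explicitly. Each user reports $\tilde{v}_i = \mathcal{M}(v'_i)$ with $\mathbb{E}[\tilde{v}_i \mid v'_i] = v'_i$, since the mechanism is unbiased on $[-1,1]$. Inverting the normalization $v' = \frac{2(v_{\text{clip}}-l)}{r-l}-1$ gives the estimate
\[
\hat{\mu} = l + \frac{W}{2}\left(\frac{1}{n}\sum_{i=1}^{n} \tilde{v}_i + 1\right), \qquad W = r-l .
\]
Since $l$ and the additive constant are deterministic, we get $\text{Var}(\hat{\mu}) = \frac{W^2}{4}\,\text{Var}\!\left(\frac{1}{n}\sum_i \tilde{v}_i\right)$. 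Reports are independent across users, so $\text{Var}\!\left(\frac{1}{n}\sum_i \tilde{v}_i\right) = \frac{1}{n^2}\sum_i \text{Var}(\tilde{v}_i)$.

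Second, I will argue that this normalized-domain variance does not depend on $W$, at least to leading order. By the law of total variance,
\[
\text{Var}(\tilde{v}_i) = \mathbb{E}\big[\text{Var}(\mathcal{M}(v'_i)\mid v'_i)\big] + \text{Var}(v'_i).
\]
For the standard mechanisms (Duchi, three-output, PM, N-output), the first term is determined by $\epsilon$ alone, up to a bounded dependence on $v'$. For example, Duchi's conditional variance is $\left(\frac{e^\epsilon+1}{e^\epsilon-1}\right)^2 - v'^2$. So I will take $V_\epsilon(n) := \frac{1}{n^2}\sum_i \text{Var}(\tilde{v}_i)$ as a quantity depending only on $\epsilon$ and $n$, and write $\text{Var}(\hat{\mu}) = \frac{W^2}{4} V_\epsilon(n)$.

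Third, apply the same expression at the optimal width $W^*$, with the same $\epsilon$, $n$ and mechanism, to get $\text{Var}(\hat{\mu}^*) = \frac{W^{*2}}{4} V_\epsilon(n)$. Taking the ratio, the common factor $V_\epsilon(n)/4$ cancels and leaves $(W/W^*)^2$.

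The main obstacle is the second step. The conditional variance and $\text{Var}(v'_i)$ do depend on where the normalized values sit, and those positions shift with $[l,r]$. The ratio is therefore exactly $(W/W^*)^2$ only if the $v'$-dependent terms are neglected. I would handle this by stating the lemma as an equality in the small-$\epsilon$ (noise-dominated) regime, where $\mathbb{E}[\text{Var}(\mathcal{M}(v')\mid v')]$ is dominated by a $v'$-independent term of order $\epsilon^{-2}$. Alternatively, I would use the worst-case variance $\sup_{v'} \text{Var}(\mathcal{M}(v')\mid v')$, which is independent of the data and of $W$; the identity then holds exactly for that worst-case bound. I would make this assumption explicit and note that the $W^2$ scaling itself is exact.
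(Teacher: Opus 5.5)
Your proposal is correct and follows the paper's proof essentially step for step: you write $\hat{\mu} = \frac{W}{2}\bar{v}' + \left(\frac{W}{2}+l\right)$, pull out the factor $W^2/4$, use independence across users to get the variance of the averaged normalized reports, and cancel the common factor in the ratio at $W$ and $W^*$. The paper simply posits a fixed ``inherent variance'' $\sigma_{\mathcal{M}}^2$ of the normalized mechanism, which is exactly the $W$-independence assumption you flag in your second step. Your caveat, that the conditional variance and $\text{Var}(v'_i)$ shift with $[l,r]$ so the ratio is exact only under that assumption, is therefore a more careful statement of what the argument proves rather than a gap.
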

\begin{proof}
Let $\mathcal{M}$ be a normalized LDP mechanism on $[-1,1]$ with inherent variance $\sigma_{\mathcal{M}}^2$. For $n$ users, the server computes the average of the perturbed normalized values, $\bar{v}' = \frac{1}{n} \sum_{i=1}^{n} \tilde{v}'_i$. The variance of this average is:
\begin{equation}
    \text{Var}(\bar{v}') = \frac{\text{Var}(\tilde{v}'_i)}{n} = \frac{\sigma_{\mathcal{M}}^2}{n}
\end{equation}
The server then de-normalizes this average to obtain the final estimate $\hat{\mu}$. Let $W = r-l$. The inverse transformation is a linear function:
\begin{equation}
    \hat{\mu} = \left(\frac{\bar{v}' + 1}{2}\right)W + l = \left(\frac{W}{2}\right)\bar{v}' + \left(\frac{W}{2} + l\right)
\end{equation}
Using the property of variance, $\text{Var}(aX+b) = a^2\text{Var}(X)$, we derive the variance of the final estimator $\hat{\mu}$:
\begin{align}
    \text{Var}(\hat{\mu}) &= \text{Var}\left( \left(\frac{W}{2}\right)\bar{v}' + \left(\frac{W}{2} + l\right) \right) \\
    &= \left(\frac{W}{2}\right)^2 \text{Var}(\bar{v}') \\
    &= \frac{W^2}{4} \cdot \frac{\sigma_{\mathcal{M}}^2}{n} = \frac{W^2 \sigma_{\mathcal{M}}^2}{4n} \label{eq:var_final}
\end{align}
Equation \eqref{eq:var_final} shows that $\text{Var}(\hat{\mu})$ is quadratically proportional to the range width, i.e., $\text{Var}(\hat{\mu}) \propto W^2$. Let $W^*$ be the optimal range width that minimizes the MSE, and let $\hat{\mu}^*$ be the corresponding estimator. The variance for this optimal estimator is thus $\text{Var}(\hat{\mu}^*) \propto (W^*)^2$. The ratio of the variances is therefore:
\begin{equation}
    \frac{\text{Var}(\hat{\mu})}{\text{Var}(\hat{\mu}^*)} = \frac{W^2}{(W^*)^2} = \left(\frac{W}{W^*}\right)^2
\end{equation}
which concludes the proof.
\end{proof}


Lemma \ref{lem:variance} describes how an overly wide range increases the random error. The term $(W/W^*)^2$ means that if we set our range to be unnecessarily twice as wide as the optimal range, the variance increases fourfold. In our problem, this means that a wide range makes the LDP noise dominate the true data signal, making the final estimate unreliable and unstable. This implies that an effective algorithm must narrow its bounds to control this variance.

These lemmas reveal a fundamental bias-variance trade-off. Lemma \ref{lem:bias} implies a pressure to widen the range $[l,r]$ to minimize bias. Conversely, Lemma \ref{lem:variance} implies a pressure to narrow the range to control variance. An effective algorithm must therefore find an optimal balance between these competing forces. This implies that fitting $[l, r]$ to the true bounds $[l^*, r^*]$ would not be optimal.

\begin{figure*}[ht]  
    \centering      
    \includegraphics[width=0.85\textwidth]{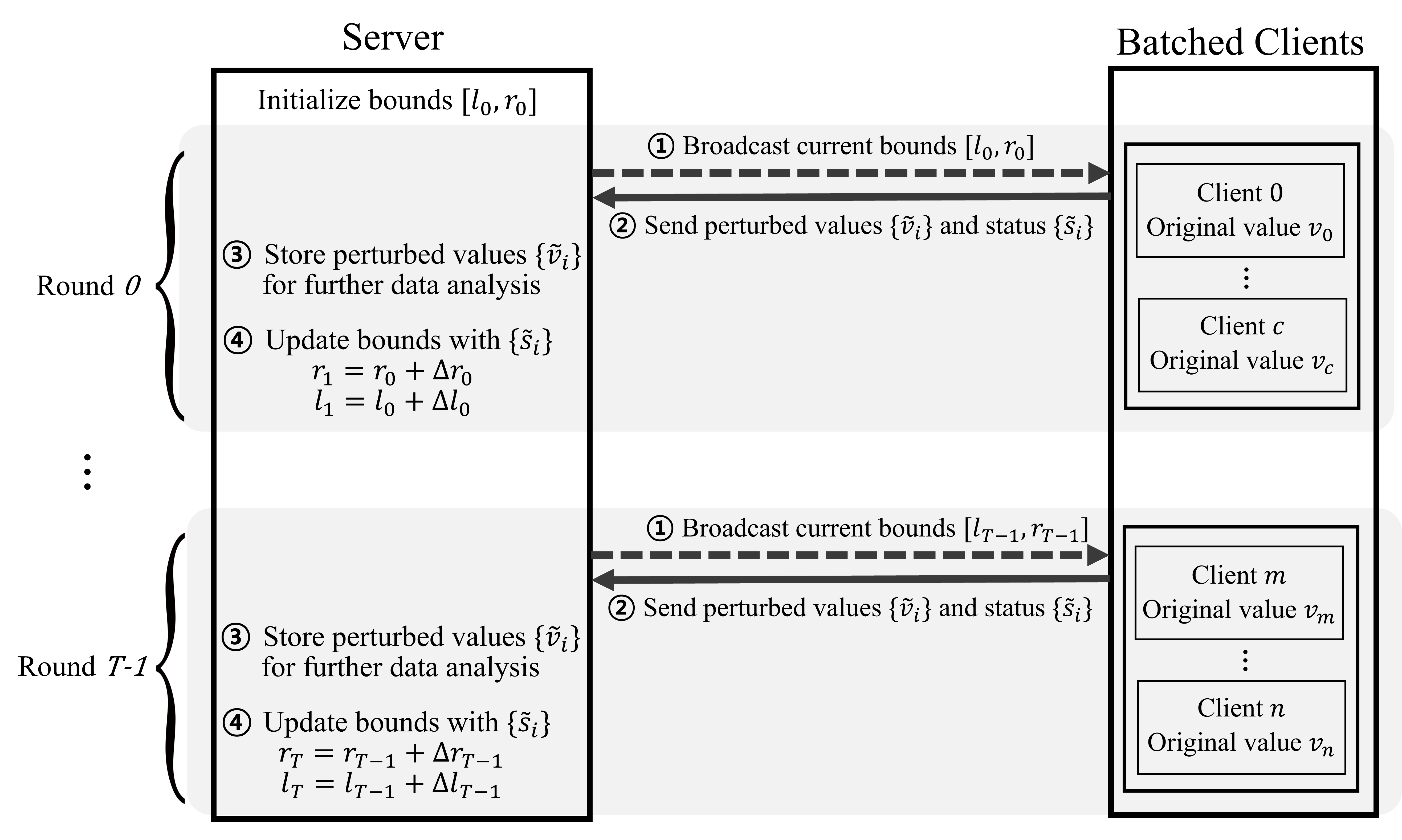}
    \caption{Overview of the ABC framework. In each round $t$, the server and a batch of clients interact to refine the data bounds. The server broadcasts the current bounds $[l_t, r_t]$, and clients respond with privatized signals ($\tilde{v}_i, \tilde{s}_i$). The server stores $\tilde{v}_i$ for further data analysis and uses $\tilde{s}_i$ to compute the next bounds $[l_{t+1}, r_{t+1}]$.}
    \label{fig:framework}
\end{figure*}

\section{Proposed Method} \label{sec:method}
This section outlines the \textbf{ABC} method to find an appropriate data domain $[l, r]$ for numerical data collection under LDP. We first present the overall framework for updating $[l, r]$ and introduce our update function. Then, we present a detailed formulation of the update function and a formal convergence analysis.

\subsection{The proposed framework: Adaptive Bounding of
Clipping regions} \label{sec:5a}
To find the appropriate data domain from a state of zero knowledge, we propose an iterative protocol that operates in rounds. At each round, clients not only provide the perturbed numerical data $\tilde{v}$, but also transmit a privatized signal $\tilde{s}$ indicating where their true value lies relative to the given bounds $[l,r]$. By aggregating these signals, the server can iteratively adjust the bounds to better fit the data distribution.

The overall framework is illustrated in Figure~\ref{fig:framework} and explained as follows.
\begin{enumerate}[leftmargin=*]
    \item \textbf{Initialization:} The server begins by setting initial bounds $[l_0, r_0]$.

    \item \textbf{Iteration for each round $t=0, 1, 2, \dots, T-1$:}
    \begin{enumerate}
        \item \textit{Server Broadcasts:} The server sends the current bounds $[l_t, r_t]$ to the batch of clients in this round.
        \item \textit{Client Reports:} Each client $i$ with a private value $v_i$ performs two actions. First, it determines its clipping status, $s_i \in \{\text{left, right, in}\}$, which indicates if $v_i < l_t$, $v_i > r_t$, or $l_t \le v_i \le r_t$, respectively. Second, it clips its value to the given range, $v_{i, \text{clip}} = \max(l_t, \min(r_t, v_i))$. The client then sends two privatized signals to the server:
        \begin{itemize}
            \item A perturbed clipping status, $\tilde{s}_i$, generated by applying an LDP mechanism for categorical data, such as GRR, to the true status $s_i$.
            \item A perturbed numerical value, $\tilde{v}_{i}$, generated by applying a numerical LDP mechanism to the clipped value $v_{i, \text{clip}}$.
        \end{itemize}
        The introduction of the status signal $\tilde{s}_i$ does not inflate the overall privacy budget. To maintain the strict total privacy budget $\epsilon$, the client splits it using a hyperparameter $\beta \in (0,1)$, allocating $(1-\beta) \epsilon$ to $\tilde{s}_i$ and $\beta \epsilon$ to $\tilde{v}_i$. Following the composition theorem, the ABC method is $\epsilon$-LDP.
        \item \textit{Server Updates:} The server aggregates the reports from batched clients in each round. Firstly, the server store the perturbed numerical values for further data analysis. Then, the server uses the set of perturbed statuses $\{\tilde{s}_i\}$ to estimate the proportions of left-clipped data ($\hat{\theta}_{l,t}$), right-clipped data ($\hat{\theta}_{r,t}$), and in-bound data ($\hat{\theta}_{0,t}$). Based on these estimated ratios, the server computes the adjustments $\Delta l_t$ and $\Delta r_t$ using our proposed update function. Finally, the bounds are updated for the next round:
        \begin{equation}
            r_{t+1} = r_t + \Delta r_t \quad \text{and} \quad l_{t+1} = l_t + \Delta l_t 
        \end{equation}
    \end{enumerate}
\end{enumerate}
This iterative process allows the server to dynamically adjust the data range based on empirical evidence provided by the clients. The ABC method leverages the estimated clipping ratios to guide the bounds toward a more optimal range. 

To implement this, the ABC method updates the bounds driving clipped data ratios ${\theta}_{r,t}$ and ${\theta}_{l,t}$ to be a target clipping ratio, $\alpha$, as visualized in Figure~\ref{fig:dist}. The update functions, whose formal derivation and convergence analysis are presented in Section \ref{sec:formulation}, are defined as follows:
\begin{align}
\Delta r_t &= \eta \cdot \left( \frac{r_t-l_t}{\max(\hat{\theta}_{0,t}, \zeta)} \right) \cdot \text{sign}(e_{r,t}) \cdot |e_{r,t}|^{\tau} \label{eq:update_r} \\
\Delta l_t &= - \eta \cdot \left( \frac{r_t-l_t}{\max(\hat{\theta}_{0,t}, \zeta)} \right) \cdot \text{sign}(e_{l,t}) \cdot |e_{l,t}|^{\tau} \label{eq:update_l}
\end{align}
where $\eta$ is a base learning rate that controls the magnitude of the bound adjustments, $\zeta$ is a small constant for numerical stability to prevent division by zero, and $\tau$ is an amplification parameter. The error terms $e_{r,t}$ and $e_{l,t}$ are defined as the difference between the estimated and target clipping ratios:
\begin{equation} \label{eq:19}
    e_{r,t} = \hat{\theta}_{r,t} - \alpha \ , \quad \quad  e_{l,t} = \hat{\theta}_{l,t} - \alpha
\end{equation}

ABC method is agnostic to the internal distribution within the domain, as the update mechanism relies exclusively on the proportion of clipped data at the boundaries.

\textbf{Efficiency and Practicality.} 
The ABC framework is highly efficient, imposing negligible communication and computational overhead. The only additional communication cost per user is for the privatized clipping status ($\tilde{s}_i$), which requires just 2 bits of data. This payload is insignificant compared to the accompanying numerical value. Computationally, the framework is also lightweight: clients perform only a simple comparison and apply a standard LDP mechanism, while the server's additional task is aggregating status counts and executing basic arithmetic update functions. This efficient design makes ABC a practical solution that can be readily integrated into existing systems without introducing significant bottlenecks. Furthermore, for high-dimensional data, the framework can be easily extended using standard dimension sampling techniques, where each user reports only one randomly selected dimension. This ensures that the communication cost per user remains constant $O(1)$ regardless of dimensionality, while the server's computational cost scales linearly $O(d)$, maintaining practicality even in high-dimensional settings.

Additionally, it is important to note that ABC is designed for scenarios involving continuous or multi-round data collection. While it requires iterative refinement, it does not mandate repeated participation from the same users. In practice, the server can partition a large user base into disjoint batches, applying updated bounds to sequential groups, thereby mitigating the limitation of single-round estimation.

In the following sections, we present derivation of the update function and convergence analysis of the update function.

\subsection{Update Function Formulation} \label{sec:formulation}
The ultimate goal is to find bounds that minimize the MSE, which decomposes into bias and variance as Eq. \eqref{eq:mse}. However, since the bias term consists of data distribution term:
\begin{equation}
    \text{Bias}(\hat{\mu}) = \int_{-\infty}^{l_t} (l_t - v) p(v) \,dv + \int_{r_t}^{\infty} (r_t - v) p(v) \,dv,
\end{equation}
the direct optimization for MSE is infeasible.

\begin{figure}[ht] 
    \centering        
    \includegraphics[width=0.45\textwidth]{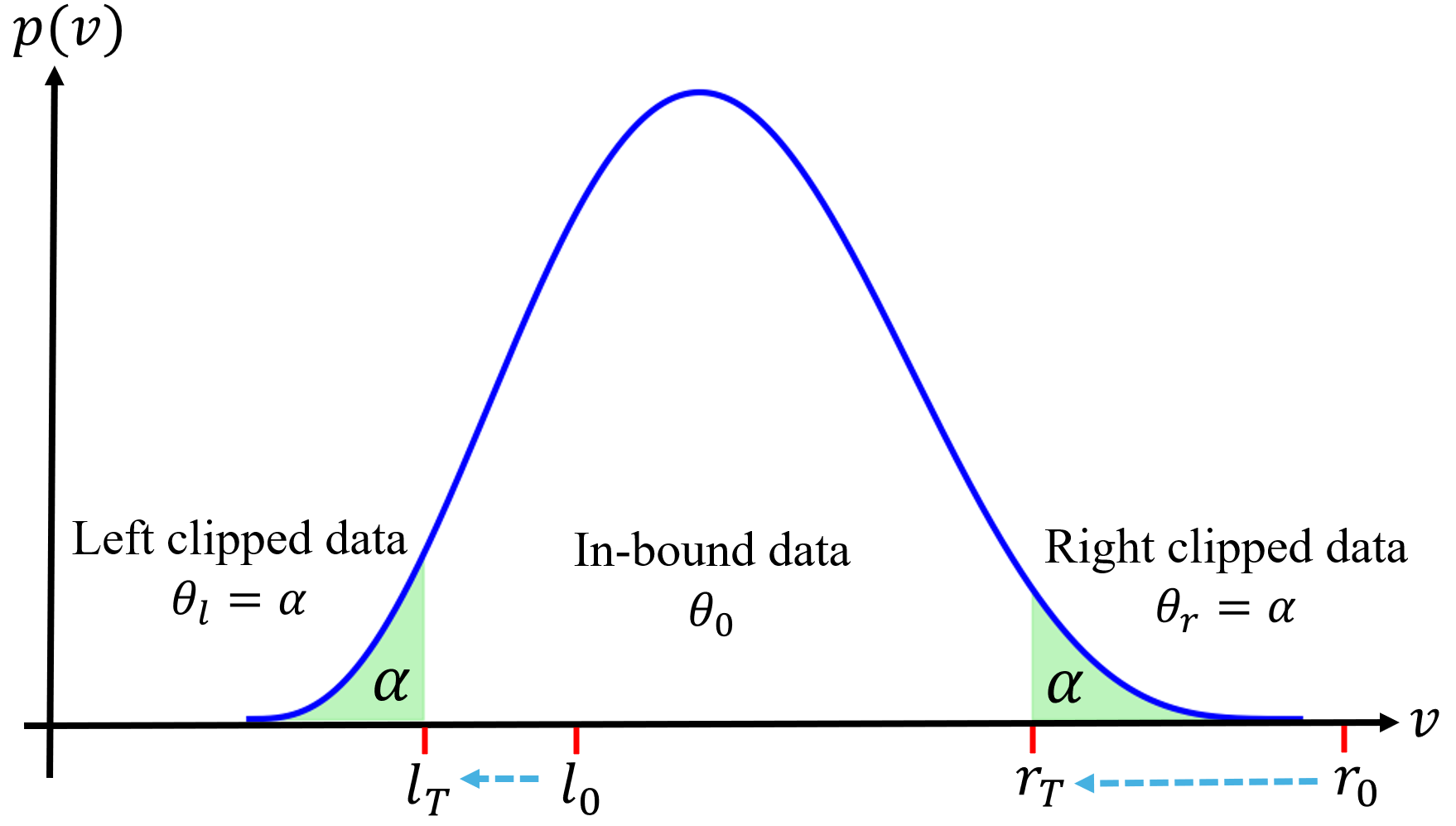}
    \caption{Illustration of the ABC method's objective. The algorithm iteratively adjusts the data bounds $[l, r]$. The objective is to find bounds where the proportion of clipped data on each side converges to a target clipping ratio, $\alpha$.}
    \label{fig:dist}
\end{figure}

\textbf{Surrogate objective.} Instead, we propose a surrogate objective: to iteratively adjust the bounds until the clipping ratio on each side, $\theta_{r,t}$ and $\theta_{l,t}$, converges to a predefined target ratio, $\alpha$. The hyperparameter $\alpha \in [0, 0.5)$ serves as a direct control mechanism for the bias-variance trade-off. As established by Lemma~\ref{lem:bias} and Lemma~\ref{lem:variance}, the choice of $\alpha$ allows a data collector to prioritize low bias (small $\alpha$) or low variance (larger $\alpha$). While a theoretically optimal $\alpha^*$ that minimizes the MSE exists for any given distribution, its value is unknown without access to that distribution. Consequently, $\alpha$ serves as a hyperparameter that provides control and flexibility over the trade-off.

To achieve this surrogate objective, we formulate the problem as minimizing a loss function employing the second-order optimization. We define our loss function, $\mathcal{L}(r)$, as a generalized form of absolute error:
\begin{equation}
    \mathcal{L}(\theta_r;\alpha) = \frac{1}{\tau_0} | \theta_r - \alpha  |^{\tau_0}
\end{equation}
where $\tau_0 > 0$ is a parameter that controls the shape of the loss function. The update function for a second-order method, such as Newton's method, is defined as:
\begin{equation} \label{eq:newton_rule} 
\Delta r = - \left(\frac{\partial^2 \mathcal{L}}{\partial r^2}\right)^{-1} \cdot \frac{\partial \mathcal{L}}{\partial r}
\end{equation}

The first derivative of this loss function respect to $r$ is:
\begin{align} \label{eq:grad}
    \frac{\partial \mathcal{L}}{\partial r} &=  \text{sign} (\theta_r - \alpha) \cdot | \theta_r - \alpha|^{\tau_0-1}  \cdot \frac{\partial}{ \partial r} \theta_r\\
    &= \text{sign} (\theta_r - \alpha) \cdot | \theta_r - \alpha|^{\tau_0-1} \cdot \frac{\partial}{ \partial r} \int_r^\infty p(v) dv  \\
    &= - p(r) \cdot \text{sign} (\theta_r - \alpha) \cdot|\theta_r - \alpha|^{\tau_0-1}
\end{align}
where $p(r)$ is the true density at the boundary $r$. While a first-order optimization approach is possible using this gradient, it lacks an inherent mechanism for dynamically adapting the learning rate, requiring careful manual tuning. In contrast, a second-order method naturally provides an adaptive learning rate by incorporating curvature information. Differentiating Eq.~\eqref{eq:grad} further yields the second derivative:
\begin{equation}
    \frac{\partial^2 \mathcal{L}}{\partial r^2} = p(r)^2 (\tau_0-1)|e_r|^{\tau_0-2} - p'(r) \text{sign}(e_r) |e_r|^{\tau_0-1}   \\
\end{equation}
where $e_r = \theta_r - \alpha$. However, the server does not know $p(r)$ and its derivative $p'(r)$. To proceed, we approximate these unknown quantities using observable values. We hypothesize that the density at a boundary, $p(r)$, can be assumed by the average density within the central region $[l, r]$. Applying this to our iterative framework, at any given round $t$, $p(r_t)$ can be defined as:
\begin{equation} \label{eq:pr}
    p(r_t) = \frac{\hat{\theta}_{0,t}}{(r_t-l_t)}
\end{equation}
Consequently, $p'(r)$ can be estimated as $\Delta p'(r_t)$. However, the objective of second-order optimization is to provide a dynamic learning rate according to the loss curvature, especially when the error is small. As $|\theta_r - \alpha|$ becomes smaller, the term $p(r)^2 (\tau_0-1)|e_r|^{\tau_0-2}$ becomes dominant. Rather than estimating $p'(r)$, we approximate $\frac{\partial^2 \mathcal{L}}{\partial r^2}$ directly as:
\begin{equation} \label{eq:second-order}
    \frac{\partial^2 \mathcal{L}}{\partial r^2} \approx  p(r)^2 (\tau_0-1)|e_r|^{\tau_0-2}
\end{equation}
This approximation is intentionally designed for high accuracy as the error approaches zero. For large errors, moving in the correct direction is more important than the step size's precise magnitude. Conversely, in the critical regime near convergence, a correct learning rate is essential to prevent overshooting and ensure stable convergence. Substituting Eq. \eqref{eq:grad}, \eqref{eq:second-order} into Eq. \eqref{eq:newton_rule}, the update function becomes:
\begin{equation} \label{eq:true_update}
    \Delta r_t = \eta \cdot \frac{(r_t-l_t)}{\hat{\theta}_{0,t}} \cdot \text{sign} (e_{r,t}) \cdot|e_{r,t}|
\end{equation}
where $e_{r,t} = \hat{\theta}_{r,t} - \alpha $. To prevent division by zero, we replace $\hat{\theta}_{0,t}$ with $\max(\hat{\theta}_{0,t}, \zeta)$ for a small constant $\zeta > 0$. Furthermore, to amplify low-signal errors (small $e_{r,t}$) and facilitate faster convergence near the target, we raise the error term to the power of $\tau$ that controls the sensitivity. Then, the final update function becomes:
\begin{align}
\Delta r_t &= \eta \cdot \left( \frac{r_t-l_t}{\max(\hat{\theta}_{0,t}, \zeta)} \right) \cdot \text{sign}(e_{r,t}) \cdot |e_{r,t}|^{\tau} \\
\Delta l_t &= - \eta \cdot \left( \frac{r_t-l_t}{\max(\hat{\theta}_{0,t}, \zeta)} \right) \cdot \text{sign}(e_{l,t}) \cdot |e_{l,t}|^{\tau} 
\end{align}
We discuss the theoretical role of $\tau$ in Section \ref{sec:convergence} and its empirical results in Section \ref{sec:exp}. We now elaborate on the key components of our update function.

\textbf{The update is scale invariance.} An update of a fixed magnitude would be significant for a narrow range but negligible for a wide one. To be robust, the algorithm's adjustments must be consistent across different data scales. This requires the absolute update $\Delta r_t$ to be proportional to the current range width $r_t - l_t$. Our update function achieves scale invariance by ensuring that $\Delta r_t \propto (r_t - l_t)$.

\textbf{Dynamic learning rate.} With a base learning rate $\eta$, the term $\eta \cdot  \frac{r_t-l_t}{\max(\hat{\theta}_{0,t}, \zeta)} $ acts as an adaptive learning rate. This structure is derived from a second-order optimization. It automatically scales the update step based on the estimated density, ensuring larger steps in larger $\hat{\theta}_{0,t}$ and smaller steps in smaller $\hat{\theta}_{0,t}$, which promotes stable convergence.

\textbf{Error response.} The theoretically derived rule Eq.~\eqref{eq:true_update} results in a linear response to the error, $|e_{r,t}|$. As the system approaches its target ($e_{r,t} \to 0$), linear updates can lead to premature convergence. Although the dynamic learning rate partially addresses this problem, we replace the linear term with a non-linear signal, $|e_{r,t}|^{\tau}$, which amplifies weak signals. This is a standard technique in control theory to ensure robustness and faster convergence near the optimum, which is proved in the following section.

\subsection{Convergence Analysis} \label{sec:convergence}
We now show that the proposed update function ensures the error converges to zero. The analyses are presented for the right bound, $r$.

\begin{lem}[Convergence] \label{lem:convergence}
For the right bound $r$, the error term $e_{r,t} = \hat{\theta}_{r,t} - \alpha$ converges to zero as $t \to \infty$, provided the learning rate $\eta$ is sufficiently small.
\end{lem}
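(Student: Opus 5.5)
We plan a deterministic (mean-field) Lyapunov argument, treating the estimated ratios as their population values. Write $\theta_r(r)=\int_r^\infty p(v)\,dv$, which is nonincreasing in $r$ with $\theta_r'(r)=-p(r)$, and let $r^\ast$ satisfy $\theta_r(r^\ast)=\alpha$. We assume $p$ is continuous and positive near $r^\ast$, so that $r^\ast$ is unique locally. We also assume the width $W_t=r_t-l_t$ stays bounded in some interval $[W_{\min},W_{\max}]$; this is justified by the symmetric left-bound analysis, since both errors shrink. We take the Lyapunov function $V_t=\tfrac12 e_{r,t}^2$ with $e_{r,t}=\theta_r(r_t)-\alpha$, and we aim to show $V_{t+1}<V_t$ whenever $e_{r,t}\neq 0$, together with $V_{t+1}\le(1-c)V_t$ on compact sets bounded away from zero.

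\textbf{Step 1 (one-step error recursion).} We apply the mean value theorem to $\theta_r$ between $r_t$ and $r_{t+1}=r_t+\Delta r_t$:
\[
e_{r,t+1}=e_{r,t}-p(\xi_t)\,\Delta r_t=e_{r,t}-\eta\,\frac{p(\xi_t)\,W_t}{\max(\theta_{0,t},\zeta)}\,\mathrm{sign}(e_{r,t})\,|e_{r,t}|^{\tau}.
\]
Define $\kappa_t=p(\xi_t)W_t/\max(\theta_{0,t},\zeta)$. This is exactly the quantity the density approximation \eqref{eq:pr} was designed to keep near $1$, and in general it lies in $[0,K]$ for a constant $K$ determined by $\sup p$, $W_{\max}$ and $\zeta$. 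The recursion then reads $e_{r,t+1}=\mathrm{sign}(e_{r,t})\,|e_{r,t}|\bigl(1-\eta\kappa_t|e_{r,t}|^{\tau-1}\bigr)$.

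\textbf{Step 2 (strict decrease).} It suffices to show $0<\eta\kappa_t|e_{r,t}|^{\tau-1}<2$. This gives $|e_{r,t+1}|<|e_{r,t}|$, hence $V_{t+1}-V_t=\tfrac12 e_{r,t}^2\bigl[(1-\eta\kappa_t|e|^{\tau-1})^2-1\bigr]<0$.

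The lower bound $\kappa_t>0$ needs $p(\xi_t)>0$. It holds on the region where $p$ is positive, and if a plateau with $p=0$ is crossed, $r$ still moves monotonically toward $r^\ast$, so we argue separately that the plateau is exited.

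The upper bound holds whenever $\eta<2/(K\max|e|^{\tau-1})$. Since $|e|\le 1$, it is enough that $\eta<2/K$ when $\tau\ge 1$. When $\tau<1$, $|e|^{\tau-1}$ blows up near zero. In that case we only get convergence to a neighbourhood of size $O\bigl((\eta K)^{1/(1-\tau)}\bigr)$, which shrinks as $\eta\to 0$. We will state this as the reading of ``sufficiently small $\eta$'', or simply restrict to $\tau\ge1$.

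\textbf{Step 3 (convergence to zero).} $|e_{r,t}|$ is monotone and bounded below, so it has a limit $e_\infty\ge 0$. If $e_\infty>0$, continuity gives $\kappa_t\ge\kappa_{\min}>0$ along the tail. Each step would then reduce $V$ by at least $\eta\kappa_{\min}e_\infty^{\tau+1}$, contradicting $V\ge0$. Hence $e_{r,t}\to0$.

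\textbf{Main obstacle.} The hardest part is controlling $\kappa_t$ uniformly. This requires a positive lower bound on $p$ near $r^\ast$ and bounded width $W_t$, and it couples $r$ with $l$. The coupling is handled through the symmetric left-bound argument, which keeps $W_t$ in a compact interval. We also note that with finite batches the estimates $\hat\theta$ are noisy. For that case we would add a remark that the same recursion, with zero-mean GRR noise of variance $O(1/n)$, gives convergence in expectation to an $O(\eta/\sqrt n)$ neighbourhood, via a stochastic-approximation argument, rather than exact convergence.
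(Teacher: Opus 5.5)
Your proposal is correct and follows the same route as the paper: the Lyapunov function $V=\tfrac12 e_{r,t}^2$, the first-order relation $\Delta e_{r,t}=-p\,\Delta r_t$, and the requirement $|\Delta e_{r,t}|<2|e_{r,t}|$ turned into an upper bound on $\eta$. It is tighter than the paper's version, since you use the mean value theorem where the paper uses a Taylor approximation, and your Step~3 actually proves $e_{r,t}\to 0$ where the paper stops at $\Delta V_t\le 0$ and cites Lyapunov's direct method. (In Step~3, $\kappa_{\min}>0$ needs $p$ bounded below on the whole sublevel set $\{r:|\theta_r(r)-\alpha|\le|e_{r,1}|\}$, not only near $r^\ast$.)

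Your restriction to $\tau\ge 1$, or to an $O\bigl((\eta K)^{1/(1-\tau)}\bigr)$ neighbourhood when $\tau<1$, is warranted rather than a weakness. For $\tau<1$ the paper's own step-wise condition $\eta\le 2\hat{\theta}_{0,t}|e_{r,t}|^{1-\tau}/\bigl(p(r_t)(r_t-l_t)\bigr)$ cannot be met by a fixed $\eta$ as $e_{r,t}\to 0$, and its convergence-rate lemma obtains $\Delta V_t<0$ only for $V_t>\tfrac12(\eta/2)^{2/(1-\tau)}$. So exact convergence to zero is not established for the paper's default $\tau=1/2$, and in the idealized recursion with $\kappa_t\equiv 1$ zero is in fact repelling.
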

\begin{proof}
We use Lyapunov stability analysis for the error of the right bound, $e_{r,t}$. We define a Lyapunov function $V(e_{r,t}) = \frac{1}{2} e_{r,t}^2$, which is non-negative and zero only at $e_{r,t} = 0$. The system is stable if $\Delta V_t = V(e_{r,t+1}) - V(e_{r,t}) \leq 0$.

The error evolves as $e_{r,t+1} = e_{r,t} + \Delta e_{r,t}$
The change in the Lyapunov function is:
\begin{equation}
    \Delta V_t = V(e_{r,t+1}) - V(e_{r,t}) = e_{r,t} \Delta e_{r,t} + \frac{1}{2}(\Delta e_{r,t})^2
\end{equation}
The change in error $\Delta e_{r,t}$ is induced by the bound update $\Delta r_t$. As $\Delta \theta_r = \theta_{r, t+1} - \theta_{r, t}$, $\Delta \theta_r$ can be approximated by the Taylor expansion: $\Delta \theta_r \approx \frac{\partial \theta_r}{\partial r}\bigg|_{r_t} \Delta r_t$. Since $\hat{\theta}_{r,t}$ is an estimate of ${\theta}_{r,t}$ , the change in the error also can be approximated as:
\begin{align}
    \Delta e_{r,t} &= e_{r,t+1} - e_{r,t} \\
    &= \hat{\theta}_{r,t+1}  - \hat{\theta}_{r,t} \\
    &\approx \Delta \theta_r \approx \frac{\partial \theta_r}{\partial r}\bigg|_{r_t} \Delta r_t \\
    &= -p(r_t) \Delta r_t \label{eq:delta_e}
\end{align}
For stability ($\Delta V_t \le 0$), the negative first-order term must dominate the positive second-order term. This requires $|e_{r,t} \Delta e_{r,t}| \ge \frac{1}{2}(\Delta e_{r,t})^2$, which simplifies to:
\begin{equation} \label{eq:stability_cond}
    2|e_{r,t}| \ge |\Delta e_{r,t}|
\end{equation}
Substituting Eq.~\eqref{eq:update_r} into Eq.~\eqref{eq:stability_cond} yields an upper bound on $\eta$. If this bound is satisfied, $\Delta V_t \le 0$, and by Lyapunov's direct method, $e_{r,t}$ is guaranteed to converge to zero.
\end{proof}

The stability condition derived in the proof leads to the following corollary regarding the learning rate.

\begin{cor}[Learning rate condition]
The convergence guaranteed by Lemma \ref{lem:convergence} is conditional on the learning rate $\eta$ satisfying the following bound at each step $t$:
\begin{equation} \label{eq:eta_bound}
\eta \le \frac{2 \cdot \hat{\theta}_{0,t} \cdot |e_{r,t}|^{1-\tau}}{p(r_t) \cdot (r_t-l_t)}
\end{equation}
where $p(r_t)$ is the true data density at the boundary point $r_t$.
\end{cor}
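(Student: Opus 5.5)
The plan is to obtain the corollary by taking the proof of Lemma~\ref{lem:convergence} up to its stability condition, Eq.~\eqref{eq:stability_cond}, and substituting the explicit update rule. I will use the linearized error dynamics from Eq.~\eqref{eq:delta_e}, namely $\Delta e_{r,t} \approx -p(r_t)\,\Delta r_t$. Into this I substitute Eq.~\eqref{eq:update_r}. To keep the algebra clean I will assume $\hat{\theta}_{0,t} \ge \zeta$, so that $\max(\hat{\theta}_{0,t},\zeta)=\hat{\theta}_{0,t}$. This gives
\begin{equation*}
|\Delta e_{r,t}| = \eta \cdot p(r_t)\cdot \frac{r_t-l_t}{\hat{\theta}_{0,t}} \cdot |e_{r,t}|^{\tau},
\end{equation*}
because $p(r_t)\ge 0$, $r_t>l_t$, and $|\mathrm{sign}(e_{r,t})|=1$ whenever $e_{r,t}\neq 0$.

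Next I check the sign structure, so that the first-order term in $\Delta V_t$ really is negative. Since $\Delta r_t$ has the sign of $e_{r,t}$, the quantity $\Delta e_{r,t}\approx -p(r_t)\Delta r_t$ has the opposite sign. Hence $e_{r,t}\Delta e_{r,t} = -|e_{r,t}|\,|\Delta e_{r,t}|\le 0$. It follows that $\Delta V_t = -|e_{r,t}||\Delta e_{r,t}| + \tfrac12|\Delta e_{r,t}|^2 \le 0$ holds exactly when $|\Delta e_{r,t}|\le 2|e_{r,t}|$, which is Eq.~\eqref{eq:stability_cond}.

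Substituting the expression for $|\Delta e_{r,t}|$ gives
\begin{equation*}
\eta\, p(r_t)\,\frac{r_t-l_t}{\hat{\theta}_{0,t}}\,|e_{r,t}|^{\tau}\le 2|e_{r,t}|.
\end{equation*}
Dividing by the positive quantity $p(r_t)(r_t-l_t)|e_{r,t}|^{\tau}/\hat{\theta}_{0,t}$ yields Eq.~\eqref{eq:eta_bound}. This step needs $p(r_t)>0$ and $e_{r,t}\neq 0$. The case $e_{r,t}=0$ is trivial, since then $\Delta r_t=0$ and $V$ is already at its minimum. If $p(r_t)=0$, then $\Delta e_{r,t}=0$ to first order and no constraint on $\eta$ is needed, so I will remark that the bound is read as $+\infty$ in that case. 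If instead $\hat{\theta}_{0,t}<\zeta$, the same computation goes through with $\zeta$ in place of $\hat{\theta}_{0,t}$, and the resulting bound is weaker because $\zeta>\hat{\theta}_{0,t}$.

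The main obstacle is rigor rather than algebra. The condition rests on the first-order Taylor approximation of $\theta_r$ and on identifying $\hat{\theta}_{r,t}$ with $\theta_{r,t}$. Consequently the corollary is, like the lemma, a statement about the linearized, noise-free dynamics. I will state this explicitly instead of trying to control the remainder terms.

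A second point concerns the exponent of $|e_{r,t}|$ in the bound, which is $1-\tau$. When $\tau<1$ this exponent is positive, so the admissible $\eta$ shrinks as $e_{r,t}\to 0$. The condition is therefore stepwise, holding at each round $t$ as the statement says, rather than a single uniform constant. I will note that a uniform choice of $\eta$ is possible only when $\tau\ge 1$, or when $|e_{r,t}|$ is bounded away from zero.
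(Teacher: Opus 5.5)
Your proposal is correct and matches the paper's own derivation: the paper obtains the corollary by substituting the update rule Eq.~\eqref{eq:update_r} into the stability condition Eq.~\eqref{eq:stability_cond}, using the linearization $\Delta e_{r,t}\approx -p(r_t)\Delta r_t$, and solving for $\eta$. Your sign check, your handling of the $\zeta$-clamp and of the degenerate cases $e_{r,t}=0$ and $p(r_t)=0$, and your remark that for $\tau<1$ the admissible $\eta$ shrinks as $e_{r,t}\to 0$ are all sound; the paper leaves these points implicit.
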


\begin{lem}[Convergence rate] \label{lem:rate}
    Let $0 < |e_{r,t}| < 1$. The proposed method achieves linear convergence when $\tau=1$. When $0 < \tau < 1$, the method achieves a superlinear convergence rate. When $\tau > 1$, the method achieves a sublinear convergence rate near the target.
\end{lem}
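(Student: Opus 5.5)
The plan is to reuse the linearized error dynamics from the proof of Lemma~\ref{lem:convergence}, namely Eq.~\eqref{eq:delta_e}, $\Delta e_{r,t}\approx -p(r_t)\Delta r_t$. Combined with the update rule Eq.~\eqref{eq:update_r}, this gives a one-dimensional recursion for the error alone. We work in the regime where $\hat{\theta}_{0,t}\ge\zeta$, so the max is inactive. We then adopt the same density approximation used to derive the update, Eq.~\eqref{eq:pr}, i.e.\ $p(r_t)\approx \hat{\theta}_{0,t}/(r_t-l_t)$. With these, the width and density factors cancel and the recursion becomes
\begin{equation*}
e_{r,t+1} \approx e_{r,t} - \eta\,\text{sign}(e_{r,t})\,|e_{r,t}|^{\tau}
= \text{sign}(e_{r,t})\,|e_{r,t}|\left(1-\eta|e_{r,t}|^{\tau-1}\right).
\end{equation*}
By Lemma~\ref{lem:convergence} and its corollary, $\eta$ is small enough that the step does not overshoot, so $0\le 1-\eta|e_{r,t}|^{\tau-1}<1$. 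The whole analysis then reduces to the contraction ratio $\rho_t := |e_{r,t+1}|/|e_{r,t}| = 1-\eta|e_{r,t}|^{\tau-1}$, evaluated as $|e_{r,t}|\to 0$ within $(0,1)$.

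The three cases are then read off from the exponent $\tau-1$.

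\textbf{Case $\tau=1$.} Here $\rho_t = 1-\eta$ is a constant strictly in $(0,1)$, which is linear convergence.

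\textbf{Case $\tau>1$.} Since $|e_{r,t}|^{\tau-1}\to 0$ as $|e_{r,t}|\to 0$, we get $\rho_t\to 1$, which is the definition of sublinear convergence. To make this concrete, I would compare $|e_{r,t+1}| = |e_{r,t}|-\eta|e_{r,t}|^{\tau}$ with the ODE $\dot{x}=-\eta x^{\tau}$. This gives $|e_{r,t}| = \Theta\big(t^{-1/(\tau-1)}\big)$, a polynomial rate.

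\textbf{Case $0<\tau<1$.} The exponent $\tau-1$ is negative, so $\eta|e_{r,t}|^{\tau-1}$ grows as the error shrinks. Hence $\rho_t$ decreases along the trajectory and $\rho_t\to 0$ in the limit, i.e.\ $|e_{r,t+1}|/|e_{r,t}|\to 0$, which is superlinear convergence.

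The main obstacle is this last case. The corollary's bound~\eqref{eq:eta_bound} scales like $|e_{r,t}|^{1-\tau}$, which tends to $0$ when $\tau<1$. So a fixed $\eta$ eventually violates the no-overshoot condition, and the exact fixed-step iteration would chatter around $0$ instead of converging. I would handle this the way the paper's statement implicitly does. Within the region where the corollary's condition holds at step $t$, i.e.\ $\eta|e_{r,t}|^{\tau-1}\le 1$ (or $\le 2$, allowing a sign flip with $|\rho_t|<1$), the ratio $|\rho_t|$ is smaller the closer $\eta|e_{r,t}|^{\tau-1}$ is to $1$. Superlinear convergence is then stated as the ratio tending to $0$ under the corollary's admissible, error-dependent learning rate, for instance $\eta$ chosen at the boundary of~\eqref{eq:eta_bound} up to a constant factor.

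I would also state explicitly that all three conclusions rest on two approximations: the first-order Taylor approximation of Eq.~\eqref{eq:delta_e}, and the substitution $p(r_t)\approx\hat{\theta}_{0,t}/(r_t-l_t)$, with estimation noise in $\hat{\theta}$ ignored. The result is therefore a local rate near the target, which matches the phrase ``near the target'' in the lemma.
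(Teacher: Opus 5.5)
The $\tau=1$ and $\tau>1$ cases are fine, and they are the paper's computation in different notation. Since $e_{r,t+1}=\rho_t e_{r,t}$ with $\rho_t=1-\eta|e_{r,t}|^{\tau-1}$, you have $V_{t+1}/V_t=\rho_t^2$, which is equivalent to the paper's $\Delta V_t=-\eta|e_{r,t}|^{1+\tau}+\tfrac12\eta^2|e_{r,t}|^{2\tau}$. Your comparison with $\dot x=-\eta x^{\tau}$ even sharpens case (iii) to $|e_{r,t}|=\Theta(t^{-1/(\tau-1)})$.

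The gap is your repair of the case $0<\tau<1$. Your diagnosis is correct. With fixed $\eta$, $\rho_t<0$ once $|e_{r,t}|<\eta^{1/(1-\tau)}$, and $|\rho_t|>1$ once $|e_{r,t}|<(\eta/2)^{1/(1-\tau)}$. So $0$ is repelling for the reduced map and the error cannot tend to $0$. It ends up oscillating at amplitude $O(\eta^{1/(1-\tau)})$; the two-cycle $\pm(\eta/2)^{1/(1-\tau)}$ is attracting, with multiplier $(1-2\tau)^2<1$.

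However, the error-dependent learning rate does not rescue superlinearity. After the substitution $p(r_t)=\hat{\theta}_{0,t}/(r_t-l_t)$, the corollary's bound \eqref{eq:eta_bound} is just $\eta\le 2|e_{r,t}|^{1-\tau}$, i.e.\ $|\rho_t|\le 1$. Choosing $\eta_t=c\,|e_{r,t}|^{1-\tau}$ turns the recursion into $e_{r,t+1}=(1-c)\,e_{r,t}$. At the boundary $c=2$ there is no contraction at all. For fixed $0<c<2$ you get linear convergence with ratio $|1-c|$. The ratio tends to $0$ only if you tune $c_t\to1$, i.e.\ take an exact Newton step on the linearized model, and that is available for every $\tau$. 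In fact, with $\eta_t\propto|e_{r,t}|^{1-\tau}$ the update \eqref{eq:update_r} is literally the $\tau=1$ update. So the repair erases exactly the $\tau$-dependence the lemma is about, and it replaces the paper's fixed hyperparameter $\eta$ by a schedule.

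The paper does not attempt convergence to $0$ in case (ii). It keeps $\eta$ fixed and shows $\Delta V_t<0$ only above the floor $V^*=\tfrac12(\eta/2)^{2/(1-\tau)}$. It calls the rate superlinear because, in that region, the relative decrement $|\Delta V_t|/V_t$ grows as $V_t$ shrinks. In your notation, the defensible restricted statement is as follows. While $|e_{r,t}|>\eta^{1/(1-\tau)}$ there is no overshoot, and $\rho_t\in(0,1)$ strictly decreases along the trajectory, giving accelerating contraction. Each step removes $\eta|e_{r,t}|^{\tau}>\eta^{1/(1-\tau)}$, so the error enters the ball $|e_{r,t}|\le\eta^{1/(1-\tau)}$ after finitely many steps. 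You should state the case $0<\tau<1$ in this form rather than as $\rho_t\to0$.
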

\begin{proof}
    The rate of convergence is determined by $V_t$. Substituting Eq. \eqref{eq:pr} and Eq. \eqref{eq:update_r} into Eq. \eqref{eq:delta_e}, the error change can be represented as:
    \begin{equation} \label{}
        \Delta e_{r,t} = - \eta \cdot \text{sign}(e_{r,t}) \cdot |e_{r,t}|^{\tau}
    \end{equation}
    Then, we have:
    \begin{equation}
        \Delta V_t = -\eta \cdot |e_{r,t}|^{1+\tau} + \frac{1}{2}\eta^2 \cdot |e_{r,t}|^{2\tau} 
    \end{equation}
    As the negative $\Delta V_t$ derives convergence, we analyze this expression for the cases of $\tau$.

    \textbf{i)} When $\tau=1$,
    \begin{align}
        \Delta V_t &= -\eta |e_{r,t}|^2 + \frac{1}{2} \eta^2 |e_{r,t}|^2 \\
        &= -\eta \left( 1 - \frac{1}{2} \eta \right) |e_{r,t}|^2 \\
        &= -\eta(2 - \eta) V_t
    \end{align}
    For sufficiently small $\eta$ such that $0 < \eta < 2$, $\Delta V_t < 0$, and $V_{t+1} = (1 - \eta)^2 V_t$ with $(1 - \eta)^2 < 1$. The error decreases by a constant factor at each step, defining linear convergence.

    \textbf{ii)} When $0 < \tau < 1$, the condition for convergence, $\Delta V_t < 0$, requires the negative term to dominate:
    $$\eta |e_{r,t}|^{1+\tau} > \frac{1}{2}\eta^2 |e_{r,t}|^{2\tau}$$
    Substituting $|e_{r,t}| = \sqrt{2V_t}$, we get:
    $$\eta (2V_t)^{\frac{1+\tau}{2}} > \frac{1}{2}\eta^2 (2V_t)^{\tau}$$
    $$\frac{2}{\eta} > (2V_t)^{\tau - \frac{1+\tau}{2}} = (2V_t)^{\frac{\tau-1}{2}}$$
    Since the exponent $\frac{\tau-1}{2}$ is negative, raising both sides to the power of $\frac{2}{\tau-1}$ reverses the inequality sign:
    $$\left(\frac{2}{\eta}\right)^{\frac{2}{\tau-1}} < 2V_t \implies V_t > \frac{1}{2}\left(\frac{\eta}{2}\right)^{\frac{2}{1-\tau}}$$
    Let $V^* = \frac{1}{2}(\frac{\eta}{2})^{\frac{2}{1-\tau}}$. For $V_t > V^*$ and small $\eta$, we have $\Delta V_t < - \frac{1}{2} \eta (2V_t)^{\frac{1+\tau}{2}}$. This recurrence implies superlinear convergence, as the relative decrement is larger for smaller $V_t$.
    
    \textbf{iii)} When $\tau > 1$:
    The convergence condition $\Delta V_t < 0$ again leads to the inequality:
    $$\frac{2}{\eta} > (2V_t)^{\frac{\tau-1}{2}}$$
    In this case, the exponent $\frac{\tau-1}{2}$ is positive, so the inequality sign is preserved when raising both sides to the power of $\frac{2}{\tau-1}$:
    $$\left(\frac{2}{\eta}\right)^{\frac{2}{\tau-1}} > 2V_t \implies V_t < \frac{1}{2}\left(\frac{2}{\eta}\right)^{\frac{2}{\tau-1}}$$
    Let $V^* = \frac{1}{2}(\frac{2}{\eta})^{\frac{2}{\tau-1}}$. For $V_t < V^*$ and small $\eta$, we have $\Delta V_t < - \frac{1}{2} \eta (2V_t)^{\frac{1+\tau}{2}}$. This recurrence implies sublinear convergence near zero, as the decrement becomes smaller for small $V_t$, dampening small errors.
\end{proof}

\begin{table*}[t!]
    \centering
    \caption{RMSE comparison across four datasets.}
    \label{tab:main_rmse}
    \scriptsize  
    \setlength{\tabcolsep}{2.2pt} 
    \renewcommand{\arraystretch}{1.1} 

    \begin{tabular}{ll ccccc c ccccc c ccccc c ccccc}
        \toprule
        & & \multicolumn{5}{c}{\textbf{Truncated Normal} ($\times 10^{-2}$)} & & \multicolumn{5}{c}{\textbf{HPC Voltage} ($\times 10^{-1}$)} & & \multicolumn{5}{c}{\textbf{Adult}} & & \multicolumn{5}{c}{\textbf{Employee} ($\times 10^4$)} \\
        \cmidrule{3-7} \cmidrule{9-13} \cmidrule{15-19} \cmidrule{21-25}
        Mech & $\epsilon$ & Base & ABC & TOPL & DPLAC & DCSGD & & Base & ABC & TOPL & DPLAC & DCSGD & & Base & ABC & TOPL & DPLAC & DCSGD & & Base & ABC & TOPL & DPLAC & DCSGD \\
        \midrule
        
        \multirow{5}{*}{Duchi}
          & 0.5 & 3.17 & 1.55 & 4.73 & \textbf{1.41} & 3.13 & & 2.84 & \textbf{1.30} & 5.20 & 1.89 & 3.69 & & 4.82 & \textbf{2.25} & 6.01 & 2.50 & 4.14 & & 5.17 & \textbf{0.49} & 20.26 & 0.61 & 5.32 \\
          & 1.0 & 2.82 & \textbf{0.73} & 10.3 & \textbf{0.73} & 1.45 & & 2.47 & \textbf{0.57} & 12.1 & 0.94 & 2.13 & & 3.85 & \textbf{0.83} & 11.7 & 1.20 & 1.49 & & 4.96 & \textbf{0.31} & 23.04 & 0.47 & 1.38 \\
          & 2.0 & 2.69 & \textbf{0.66} & 7.06 & 0.75 & 0.71 & & 2.32 & \textbf{0.26} & 5.14 & 0.45 & 1.60 & & 3.44 & \textbf{0.41} & 9.40 & 0.72 & 0.91 & & 4.86 & \textbf{0.22} & 25.87 & 0.36 & 0.56 \\
          & 3.0 & 2.66 & 0.64 & 6.17 & 0.70 & \textbf{0.61} & & 2.28 & \textbf{0.16} & 5.02 & 0.32 & 2.46 & & 3.35 & \textbf{0.29} & 8.80 & 0.56 & 1.14 & & 4.83 & \textbf{0.22} & 25.87 & 0.35 & 0.49 \\
          & 4.0 & 2.65 & 0.61 & 6.09 & 0.62 & \textbf{0.43} & & 2.28 & \textbf{0.16} & 5.01 & 0.28 & 1.86 & & 3.32 & \textbf{0.29} & 10.2 & 0.47 & 1.13 & & 4.82 & \textbf{0.21} & 25.87 & 0.35 & 0.52 \\
        \midrule
        
        \multirow{5}{*}{Three-output}
          & 0.5 & 3.24 & \textbf{1.45} & 4.70 & 1.81 & 2.94 & & 3.07 & \textbf{1.24} & 5.24 & 1.30 & 3.69 & & 4.96 & 2.89 & 5.66 & \textbf{1.95} & 4.47 & & 5.12 & \textbf{0.49} & 20.26 & 0.75 & 5.24 \\
          & 1.0 & 2.84 & \textbf{0.73} & 10.3 & 1.07 & 1.45 & & 2.49 & \textbf{0.56} & 12.1 & 0.88 & 2.02 & & 3.91 & \textbf{1.06} & 11.7 & 1.30 & 1.48 & & 4.93 & \textbf{0.31} & 23.04 & 0.52 & 1.42 \\
          & 2.0 & 2.71 & \textbf{0.61} & 7.06 & 0.71 & 0.71 & & 2.23 & \textbf{0.19} & 5.10 & 0.29 & 1.65 & & 3.23 & \textbf{0.42} & 9.38 & 0.66 & 1.01 & & 4.78 & \textbf{0.21} & 25.87 & 0.34 & 0.54 \\
          & 3.0 & 2.63 & \textbf{0.55} & 6.17 & 0.66 & 0.58 & & 2.20 & \textbf{0.15} & 4.98 & 0.22 & 2.44 & & 3.11 & \textbf{0.25} & 8.75 & 0.32 & 1.09 & & 4.74 & \textbf{0.20} & 25.87 & 0.28 & 0.50 \\
          & 4.0 & 2.61 & 0.59 & 6.09 & 0.64 & \textbf{0.40} & & 2.16 & \textbf{0.12} & 4.98 & 0.16 & 1.85 & & 3.07 & \textbf{0.20} & 10.2 & 0.28 & 1.08 & & 4.73 & \textbf{0.18} & 25.87 & 0.25 & 0.53 \\
        \midrule
        
        \multirow{5}{*}{PM}
          & 0.5 & 3.10 & \textbf{1.12} & 4.81 & 1.47 & 3.16 & & 2.91 & \textbf{0.84} & 5.15 & 1.95 & 3.64 & & 4.93 & \textbf{2.79} & 6.03 & 3.42 & 5.37 & & 5.02 & \textbf{0.50} & 20.22 & 0.78 & 5.27 \\
          & 1.0 & 2.81 & 0.74 & 10.3 & \textbf{0.66} & 1.54 & & 2.49 & \textbf{0.57} & 12.1 & 0.88 & 2.10 & & 3.81 & \textbf{1.31} & 11.7 & 1.38 & 1.65 & & 4.90 & \textbf{0.29} & 23.03 & 0.38 & 1.81 \\
          & 2.0 & 2.63 & \textbf{0.62} & 7.26 & 0.65 & 0.70 & & 2.27 & \textbf{0.23} & 5.13 & 0.43 & 1.64 & & 3.19 & \textbf{0.39} & 8.87 & 0.47 & 0.98 & & 4.77 & \textbf{0.24} & 25.87 & 0.35 & 0.57 \\
          & 3.0 & 2.59 & 0.62 & 6.15 & 0.61 & \textbf{0.59} & & 2.18 & \textbf{0.17} & 5.00 & 0.20 & 2.46 & & 3.07 & \textbf{0.29} & 8.14 & 0.36 & 1.07 & & 4.76 & \textbf{0.19} & 25.87 & 0.31 & 0.51 \\
          & 4.0 & 2.59 & 0.60 & 6.01 & 0.59 & \textbf{0.40} & & 2.16 & \textbf{0.12} & 4.97 & 0.20 & 1.84 & & 2.98 & 0.24 & 10.2 & \textbf{0.23} & 1.05 & & 4.73 & \textbf{0.20} & 25.87 & 0.25 & 0.53 \\
        \midrule
        
        \multirow{5}{*}{PM-SUB}
          & 0.5 & 3.07 & \textbf{0.88} & 4.80 & 1.16 & 3.21 & & 2.73 & \textbf{0.63} & 5.18 & 2.22 & 3.62 & & 5.02 & 2.95 & 5.87 & \textbf{2.64} & 4.32 & & 4.97 & \textbf{0.42} & 20.24 & 0.74 & 5.18 \\
          & 1.0 & 2.78 & \textbf{0.60} & 10.3 & 0.98 & 1.52 & & 2.81 & \textbf{0.56} & 12.1 & 0.74 & 1.98 & & 3.74 & \textbf{0.78} & 11.7 & 1.15 & 1.60 & & 4.91 & \textbf{0.36} & 23.03 & 0.43 & 1.81 \\
          & 2.0 & 2.63 & \textbf{0.59} & 7.34 & 0.60 & 0.71 & & 2.29 & \textbf{0.18} & 5.11 & 0.38 & 1.61 & & 3.17 & \textbf{0.36} & 9.41 & 0.45 & 0.93 & & 4.78 & \textbf{0.22} & 25.87 & 0.36 & 0.55 \\
          & 3.0 & 2.59 & 0.63 & 6.17 & 0.64 & \textbf{0.58} & & 2.20 & \textbf{0.17} & 4.99 & 0.28 & 2.42 & & 3.04 & \textbf{0.26} & 8.97 & 0.29 & 1.08 & & 4.74 & \textbf{0.20} & 25.87 & 0.26 & 0.50 \\
          & 4.0 & 2.60 & 0.60 & 6.08 & 0.60 & \textbf{0.41} & & 2.16 & \textbf{0.13} & 4.99 & 0.19 & 1.85 & & 2.99 & \textbf{0.20} & 9.58 & 0.30 & 1.12 & & 4.73 & \textbf{0.19} & 25.87 & 0.26 & 0.53 \\
        \midrule
        
        \multirow{5}{*}{HM-TP}
          & 0.5 & 3.29 & \textbf{1.03} & 4.65 & 1.51 & 3.59 & & 3.43 & \textbf{0.82} & 5.20 & 1.36 & 3.41 & & 4.70 & 2.45 & 5.84 & \textbf{1.87} & 4.72 & & 5.08 & \textbf{0.44} & 20.25 & 0.81 & 5.35 \\
          & 1.0 & 2.75 & \textbf{0.66} & 10.2 & 1.10 & 1.41 & & 2.39 & \textbf{0.64} & 12.0 & 0.81 & 2.09 & & 3.68 & \textbf{0.95} & 11.7 & 1.09 & 1.56 & & 4.90 & \textbf{0.30} & 23.04 & 0.53 & 1.49 \\
          & 2.0 & 2.69 & \textbf{0.66} & 7.06 & 0.68 & 0.74 & & 2.28 & \textbf{0.23} & 5.14 & 0.40 & 1.59 & & 3.18 & \textbf{0.42} & 9.40 & 0.45 & 0.89 & & 4.78 & \textbf{0.22} & 25.87 & 0.33 & 0.55 \\
          & 3.0 & 2.60 & 0.61 & 6.17 & 0.59 & \textbf{0.60} & & 2.17 & \textbf{0.16} & 4.98 & 0.24 & 2.45 & & 3.08 & \textbf{0.24} & 8.06 & 0.30 & 1.12 & & 4.75 & \textbf{0.19} & 25.87 & 0.28 & 0.50 \\
          & 4.0 & 2.56 & 0.57 & 6.15 & 0.57 & \textbf{0.41} & & 2.17 & \textbf{0.13} & 4.98 & 0.23 & 1.84 & & 3.01 & \textbf{0.19} & 9.69 & 0.22 & 1.07 & & 4.73 & \textbf{0.19} & 25.87 & 0.25 & 0.53 \\
        \midrule
        
        \multirow{5}{*}{N-output}
          & 0.5 & 2.99 & \textbf{1.08} & 4.73 & 1.16 & 3.06 & & 2.90 & \textbf{0.67} & 5.16 & 1.06 & 3.15 & & 3.82 & 4.23 & 5.69 & \textbf{2.01} & 4.06 & & 5.06 & \textbf{0.54} & 20.25 & 0.70 & 5.19 \\
          & 1.0 & 2.74 & \textbf{0.74} & 10.2 & 0.82 & 1.41 & & 2.64 & \textbf{0.40} & 12.1 & 0.64 & 2.14 & & 3.48 & 0.90 & 11.7 & \textbf{0.89} & 1.61 & & 4.89 & \textbf{0.32} & 23.04 & 0.47 & 1.97 \\
          & 2.0 & 2.61 & \textbf{0.64} & 6.93 & 0.67 & 0.69 & & 2.32 & \textbf{0.24} & 5.12 & 0.45 & 1.67 & & 3.17 & \textbf{0.34} & 8.36 & 0.48 & 0.97 & & 4.80 & \textbf{0.24} & 25.87 & 0.35 & 0.55 \\
          & 3.0 & 2.59 & 0.61 & 6.17 & 0.60 & \textbf{0.56} & & 2.19 & \textbf{0.16} & 4.99 & 0.26 & 2.46 & & 3.05 & \textbf{0.22} & 8.06 & 0.34 & 1.11 & & 4.76 & \textbf{0.21} & 25.87 & 0.29 & 0.50 \\
          & 4.0 & 2.60 & 0.60 & 6.08 & 0.60 & \textbf{0.40} & & 2.15 & \textbf{0.13} & 4.98 & 0.18 & 1.85 & & 2.99 & \textbf{0.19} & 9.21 & 0.25 & 1.08 & & 4.73 & \textbf{0.19} & 25.87 & 0.27 & 0.53 \\
        \midrule
        
        \multirow{5}{*}{AAA}
          & 0.5 & 2.80 & 0.88 & 4.71 & \textbf{0.84} & 2.75 & & 2.58 & \textbf{0.28} & 5.07 & 0.47 & 2.61 & & 2.70 & 1.74 & 4.15 & \textbf{0.98} & 2.65 & & 4.87 & \textbf{0.34} & 20.2 & 0.41 & 4.71 \\
          & 1.0 & 2.68 & 0.80 & 10.2 & \textbf{0.76} & 1.32 & & 2.35 & \textbf{0.30} & 12.0 & 0.38 & 2.01 & & 2.48 & 0.83 & 11.6 & \textbf{0.66} & 1.35 & & 4.81 & \textbf{0.22} & 23.0 & 0.32 & 2.15 \\
          & 2.0 & 2.60 & \textbf{0.65} & 6.94 & 0.67 & 0.86 & & 2.20 & \textbf{0.18} & 5.11 & 0.20 & 1.61 & & 2.32 & \textbf{0.36} & 8.31 & 0.40 & 0.80 & & 4.74 & \textbf{0.20} & 25.9 & 0.27 & 0.52 \\
          & 3.0 & 2.58 & \textbf{0.61} & 6.17 & \textbf{0.61} & 0.75 & & 2.15 & \textbf{0.12} & 5.00 & 0.13 & 2.42 & & 2.31 & \textbf{0.17} & 8.03 & 0.30 & 0.51 & & 4.72 & \textbf{0.18} & 25.8 & 0.25 & 0.46 \\
          & 4.0 & 2.57 & \textbf{0.60} & 6.08 & \textbf{0.60} & 0.70 & & 2.13 & \textbf{0.10} & 4.98 & 0.12 & 1.82 & & 2.29 & \textbf{0.15} & 9.19 & 0.25 & 0.31 & & 4.71 & \textbf{0.17} & 25.8 & 0.23 & 0.48 \\
        \bottomrule
    \end{tabular}
\end{table*}

\section{Experiments} \label{sec:exp}
In this section, we present the results of experiments conducted under various settings to validate the effectiveness of the ABC method. Our objective is to demonstrate that ABC method improves the performance of existing LDP mechanisms for numerical data collection, particularly in environments where no prior domain knowledge is available.


\subsection{Experimental Setup}
\textbf{Datasets.} Our evaluation uses one synthetic dataset, Truncated Normal, and three real-world datasets, Adult \cite{adult_2}, Employee \cite{SFEmployeeCompensation}, and HPC Voltage \cite{individual_household_electric_power_consumption_235}. Except for the main results, all subsequent experiments are reported on the Truncated Normal and HPC Voltage datasets using PM-SUB as the representative mechanism, due to space limitations.

\textbf{Baseline LDP mechanisms.} We employ six numerical LDP mechanisms as baselines: Duchi's mechanism (Duchi) \cite{duchi2018minimax}, the Three-output mechanism \cite{zhao2020local}, the Piecewise mechanism (PM) \cite{wang2019collecting}, a Suboptimal Piecewise mechanism (PM-SUB), the Hybrid mechanism Three-output and PM-SUB (HM-TP) \cite{zhao2020local}, the N-output mechanism \cite{baek2025noutput},  and AAA mechanism \cite{wei2024aaa}.

\textbf{Evaluation task.} 
We use mean estimation as the benchmark task to evaluate performance. More than just a simple statistical task, mean estimation here serves to quantify the overall error between the true statistics of the original data and the statistics recovered from the perturbed data. Initially, the server guesses an arbitrary domain without prior knowledge. For the baselines, the server estimates the mean from perturbed data collected using this fixed domain. In contrast, our ABC method estimates the mean while dynamically adjusting the domain. The evaluation metric is the Root Mean Square Error (RMSE) between the true mean and the estimated mean. A lower RMSE indicates a more accurate estimation and higher data utility. All experiments are repeated 10 times, and we report the average of these runs.

\textbf{Privacy budget.} We conduct experiments across a range of privacy budgets, $\epsilon \in \{0.5, 1, 2, 3, 4\}$, to analyze performance under different privacy constraints. 

\textbf{Hyperparameter settings.} 
Unless otherwise specified, we use the following default hyperparameter settings: target clipping ratio $\alpha=0.05$ to manage the bias-variance trade-off; learning rate $\eta=0.3$; number of rounds $T=30$; privacy budget ratio $\beta=0.7$ for splitting $\epsilon$; stability constant $\zeta=0.1$; and error amplification parameter $\tau=1/2$.


\subsection{Mean Estimation}
To evaluate the ABC method, we applied it to six baseline LDP mechanisms and measured the Root Mean Square Error (RMSE) on four datasets. To simulate practical scenarios where the data domain is unknown, we set the initial bounds using five different scale factors, $\mathrm{scale} \in \{\frac{1}{8}, \frac{1}{4}, \frac{1}{2}, \frac{2}{3}, 1, 1.5, 2, 4, 8\}$. Let the true center of the data be $c^* = (l^* + r^*) / 2$, where $l^*$ and $r^*$ are the true minimum and maximum values, respectively. The initial bounds $[l_0, r_0]$ are determined by scaling the distance from this center point:
\begin{equation}
l_0 = c^* - \mathrm{scale} \cdot (c^* - l^*), \quad r_0 = c^* + \mathrm{scale} \cdot (r^* - c^*)
\end{equation}
Unless otherwise specified, all reported results are averaged over these five initial scale settings. 

Table \ref{tab:main_rmse} shows the RMSE comparison results. The ABC method consistently achieves significantly lower RMSE than the Base and TOPL baselines across all datasets and mechanisms, demonstrating the robustness of our adaptive domain estimation strategy in unknown domain settings. Compared to state-of-the-art methods like DPLAC and DCSGD, ABC shows highly competitive performance. While these baseline methods occasionally yield competitive results in mean estimation, they exhibit significantly degraded performance in distribution estimation, as demonstrated in Section \ref{sec:distribution}. In contrast, ABC consistently maintains high accuracy across both metrics, validating its superiority as a more robust solution.
\begin{table*}[t!]
    \centering
    \caption{Wasserstein Distance comparison. (Lower is better)}
    \label{tab:ws_metric}
    \footnotesize 
    \setlength{\tabcolsep}{3pt} 
    \renewcommand{\arraystretch}{1.1} 

    \begin{tabular}{ll ccccc c ccccc c ccccc c ccccc}
        \toprule
        & & \multicolumn{5}{c}{\textbf{Truncated Normal} ($\times 10^{-2}$)} & & \multicolumn{5}{c}{\textbf{HPC Voltage}} & & \multicolumn{5}{c}{\textbf{Adult}} & & \multicolumn{5}{c}{\textbf{Employee} ($\times 10^4$)} \\
        \cmidrule{3-7} \cmidrule{9-13} \cmidrule{15-19} \cmidrule{21-25}
        Mech & $\epsilon$ & Base & ABC & TOPL & DPL & DCS & & Base & ABC & TOPL & DPL & DCS & & Base & ABC & TOPL & DPL & DCS & & Base & ABC & TOPL & DPL & DCS \\
        \midrule
        
        \multirow{5}{*}{PM}
          & 0.5 & 4.16 & \textbf{3.80} & 13.2 & 20.6 & 24.5 & & 0.83 & \textbf{0.50} & 2.57 & 9.33 & 11.3 & & 6.42 & \textbf{6.10} & 16.1 & 20.4 & 25.9 & & 6.81 & \textbf{1.08} & 26.5 & 29.3 & 35.3 \\
          & 1.0 & 4.28 & \textbf{1.47} & 13.4 & 20.7 & 2.77 & & 1.15 & \textbf{0.36} & 2.77 & 9.34 & 1.03 & & 5.70 & \textbf{2.87} & 16.9 & 20.0 & 4.74 & & 8.23 & \textbf{0.95} & 26.0 & 30.6 & 2.22 \\
          & 2.0 & 5.66 & \textbf{0.99} & 13.2 & 19.7 & 4.49 & & 1.84 & \textbf{0.34} & 2.40 & 9.19 & 1.63 & & 6.90 & \textbf{2.02} & 16.3 & 19.6 & 6.41 & & 11.6 & \textbf{0.95} & 26.0 & 30.0 & 2.73 \\
          & 3.0 & 6.51 & \textbf{1.06} & 13.1 & 13.6 & 4.77 & & 2.27 & \textbf{0.38} & 2.40 & 6.16 & 1.74 & & 7.63 & \textbf{1.92} & 16.3 & 13.8 & 6.73 & & 12.7 & \textbf{1.00} & 26.0 & 20.0 & 3.39 \\
          & 4.0 & 5.51 & \textbf{1.23} & 13.1 & 8.95 & 4.03 & & 1.77 & \textbf{0.41} & 2.40 & 4.07 & 1.41 & & 7.04 & \textbf{1.92} & 16.5 & 9.39 & 5.59 & & 11.2 & \textbf{1.14} & 26.0 & 13.3 & 3.02 \\
        \midrule
        
        \multirow{5}{*}{PM-SUB}
          & 0.5 & 3.74 & \textbf{2.60} & 13.2 & 19.5 & 24.7 & & 0.66 & \textbf{0.51} & 2.57 & 9.29 & 11.0 & & 6.02 & \textbf{7.53} & 16.1 & 23.0 & 26.0 & & 5.97 & \textbf{1.20} & 26.6 & 32.2 & 35.3 \\
          & 1.0 & 3.22 & \textbf{1.55} & 13.4 & 21.0 & 2.82 & & 0.60 & \textbf{0.47} & 2.77 & 9.51 & 1.04 & & 5.10 & \textbf{3.02} & 16.9 & 22.0 & 4.57 & & 5.72 & \textbf{1.21} & 26.0 & 30.1 & 2.17 \\
          & 2.0 & 2.92 & \textbf{1.19} & 13.2 & 20.5 & 4.49 & & 0.55 & \textbf{0.44} & 2.40 & 9.37 & 1.62 & & 4.36 & \textbf{2.23} & 16.3 & 20.2 & 6.39 & & 5.61 & \textbf{0.99} & 26.0 & 30.4 & 2.74 \\
          & 3.0 & 2.86 & \textbf{1.24} & 13.1 & 16.4 & 4.77 & & 0.54 & \textbf{0.44} & 2.40 & 7.42 & 1.73 & & 4.15 & \textbf{1.84} & 16.3 & 16.6 & 6.67 & & 5.59 & \textbf{1.02} & 26.0 & 24.2 & 3.38 \\
          & 4.0 & 2.80 & \textbf{1.16} & 13.1 & 11.2 & 4.04 & & 0.53 & \textbf{0.42} & 2.40 & 5.15 & 1.41 & & 4.04 & \textbf{1.71} & 16.5 & 11.5 & 5.66 & & 5.55 & \textbf{0.98} & 26.0 & 16.9 & 3.03 \\
        \bottomrule
    \end{tabular}
\end{table*}

\begin{table*}[t!]
    \centering
    \caption{Kolmogorov-Smirnov (KS) Statistic comparison. (Lower is better)}
    \label{tab:ks_metric}
    \footnotesize 
    \setlength{\tabcolsep}{3pt} 
    \renewcommand{\arraystretch}{1.1} 

    \begin{tabular}{ll ccccc c ccccc c ccccc c ccccc}
        \toprule
        & & \multicolumn{5}{c}{\textbf{Truncated Normal}} & & \multicolumn{5}{c}{\textbf{HPC Voltage}} & & \multicolumn{5}{c}{\textbf{Adult}} & & \multicolumn{5}{c}{\textbf{Employee}} \\
        \cmidrule{3-7} \cmidrule{9-13} \cmidrule{15-19} \cmidrule{21-25}
        Mech & $\epsilon$ & Base & ABC & TOPL & DPL & DCS & & Base & ABC & TOPL & DPL & DCS & & Base & ABC & TOPL & DPL & DCS & & Base & ABC & TOPL & DPL & DCS \\
        \midrule
        
        \multirow{5}{*}{PM}
          & 0.5 & 0.29 & \textbf{0.11} & 0.57 & 0.24 & 0.41 & & 0.17 & \textbf{0.08} & 0.39 & 0.25 & 0.36 & & 0.33 & \textbf{0.14} & 0.53 & 0.21 & 0.41 & & 0.43 & \textbf{0.09} & 0.90 & 0.33 & 0.54 \\
          & 1.0 & 0.30 & \textbf{0.08} & 0.79 & 0.22 & 0.19 & & 0.19 & \textbf{0.06} & 0.55 & 0.24 & 0.17 & & 0.31 & \textbf{0.11} & 0.74 & 0.21 & 0.25 & & 0.47 & \textbf{0.08} & 0.94 & 0.33 & 0.24 \\
          & 2.0 & 0.30 & \textbf{0.06} & 0.67 & 0.20 & 0.25 & & 0.20 & \textbf{0.05} & 0.39 & 0.23 & 0.24 & & 0.32 & \textbf{0.09} & 0.64 & 0.20 & 0.23 & & 0.49 & \textbf{0.09} & 1.00 & 0.32 & 0.19 \\
          & 3.0 & 0.31 & \textbf{0.06} & 0.63 & 0.17 & 0.26 & & 0.20 & \textbf{0.05} & 0.38 & 0.21 & 0.26 & & 0.33 & \textbf{0.09} & 0.62 & 0.19 & 0.26 & & 0.50 & \textbf{0.10} & 1.00 & 0.29 & 0.23 \\
          & 4.0 & 0.30 & \textbf{0.07} & 0.62 & 0.14 & 0.19 & & 0.19 & \textbf{0.06} & 0.38 & 0.16 & 0.19 & & 0.36 & \textbf{0.09} & 0.68 & 0.16 & 0.22 & & 0.49 & \textbf{0.10} & 1.00 & 0.25 & 0.21 \\
        \midrule
        
        \multirow{5}{*}{PM-SUB}
          & 0.5 & 0.27 & \textbf{0.10} & 0.57 & 0.22 & 0.41 & & 0.13 & \textbf{0.07} & 0.39 & 0.25 & 0.36 & & 0.33 & \textbf{0.17} & 0.53 & 0.23 & 0.41 & & 0.38 & \textbf{0.10} & 0.90 & 0.33 & 0.54 \\
          & 1.0 & 0.26 & \textbf{0.10} & 0.79 & 0.22 & 0.19 & & 0.13 & \textbf{0.07} & 0.55 & 0.24 & 0.18 & & 0.30 & \textbf{0.13} & 0.74 & 0.22 & 0.23 & & 0.38 & \textbf{0.13} & 0.94 & 0.33 & 0.23 \\
          & 2.0 & 0.25 & \textbf{0.06} & 0.68 & 0.21 & 0.25 & & 0.12 & \textbf{0.06} & 0.39 & 0.24 & 0.24 & & 0.28 & \textbf{0.11} & 0.64 & 0.21 & 0.23 & & 0.38 & \textbf{0.10} & 1.00 & 0.32 & 0.19 \\
          & 3.0 & 0.24 & \textbf{0.06} & 0.63 & 0.18 & 0.25 & & 0.12 & \textbf{0.06} & 0.38 & 0.22 & 0.26 & & 0.28 & \textbf{0.09} & 0.64 & 0.19 & 0.25 & & 0.38 & \textbf{0.11} & 1.00 & 0.30 & 0.23 \\
          & 4.0 & 0.24 & \textbf{0.06} & 0.62 & 0.15 & 0.19 & & 0.12 & \textbf{0.06} & 0.38 & 0.18 & 0.19 & & 0.27 & \textbf{0.08} & 0.66 & 0.17 & 0.22 & & 0.38 & \textbf{0.10} & 1.00 & 0.27 & 0.21 \\
        \bottomrule
    \end{tabular}
\end{table*}

\begin{figure*}[!ht]
\centering

\begin{center}
    \includegraphics[width=0.6\textwidth]{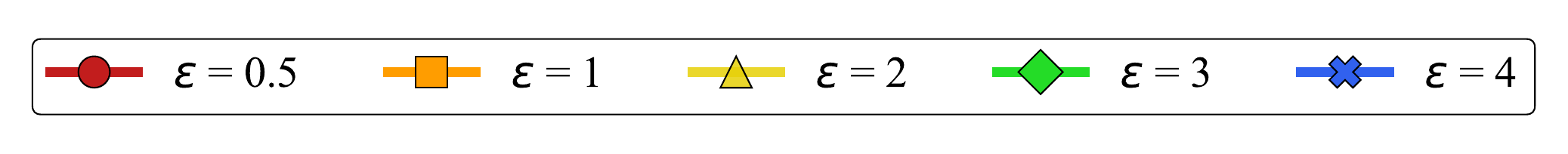}
\end{center}
\vspace{-0.2cm}

\begin{subfigure}{0.5\columnwidth}
\centering
\includegraphics[width=\linewidth, trim=0 0 0 0, clip]{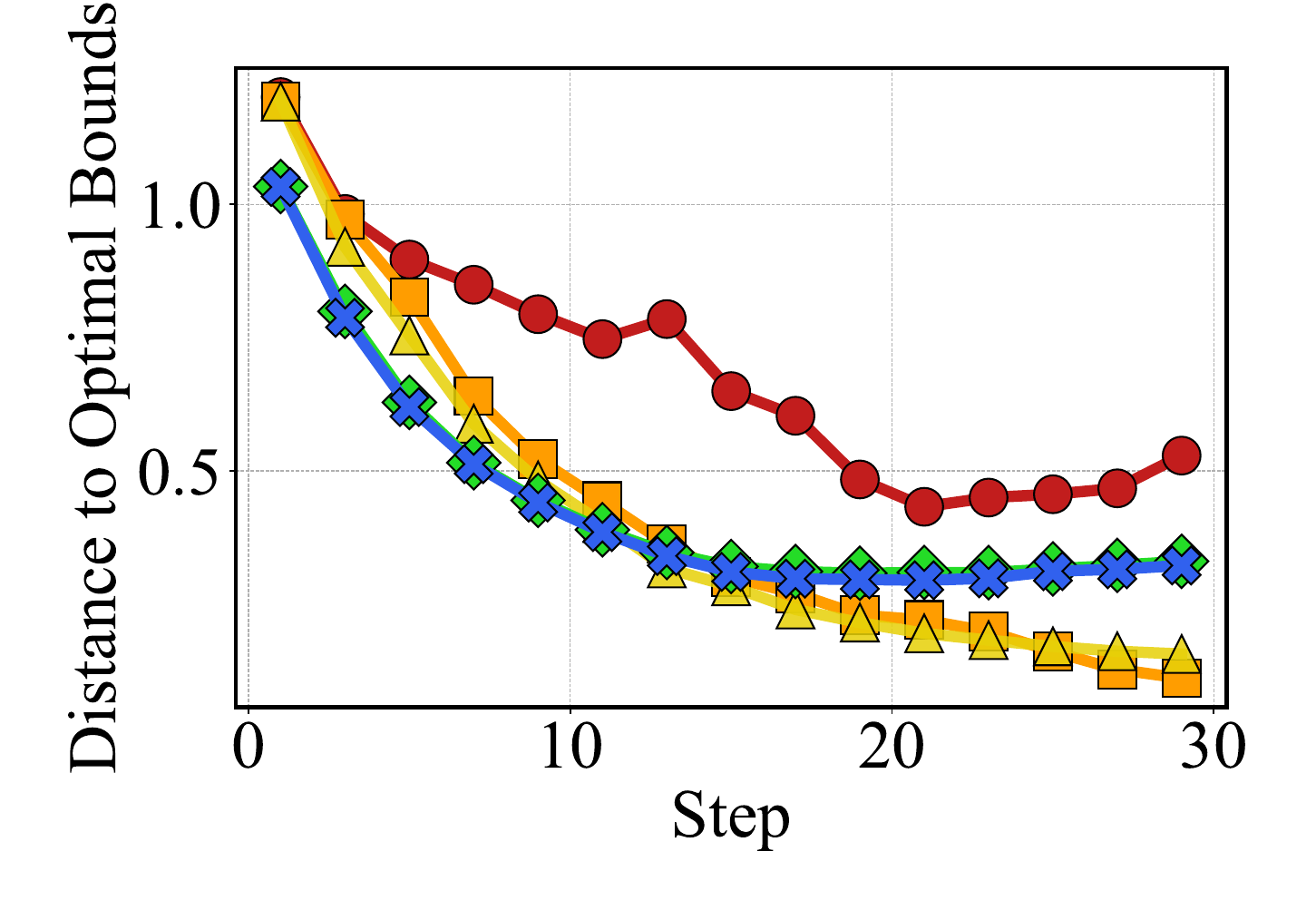}
\caption{Truncated Normal}
\end{subfigure}
\begin{subfigure}{0.5\columnwidth}
\centering
\includegraphics[width=\linewidth, trim=0 0 0 0, clip]{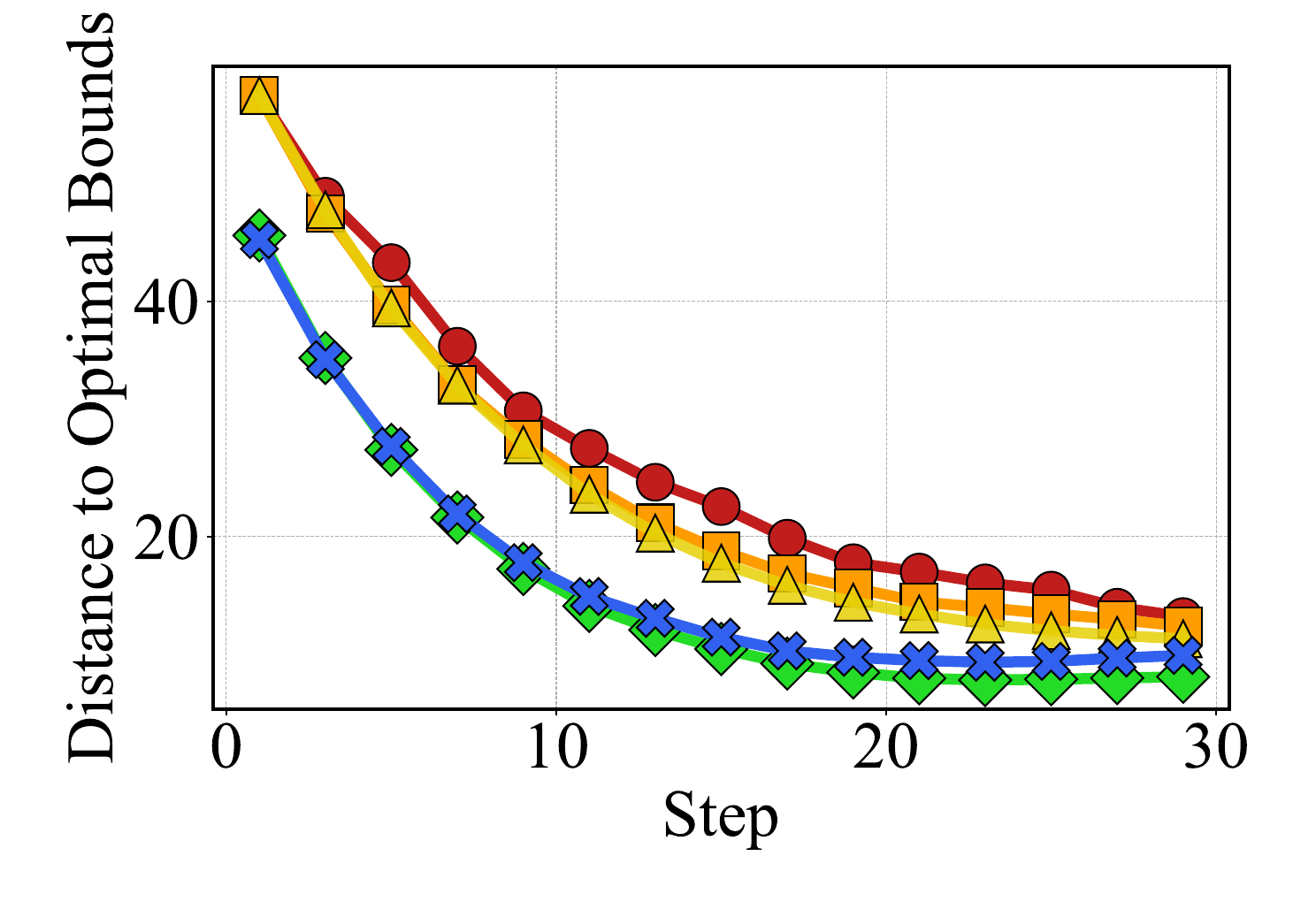}
\caption{HPC Voltage}
\end{subfigure}
\begin{subfigure}{0.5\columnwidth}
\centering
\includegraphics[width=\linewidth, trim=0 0 0 0, clip]{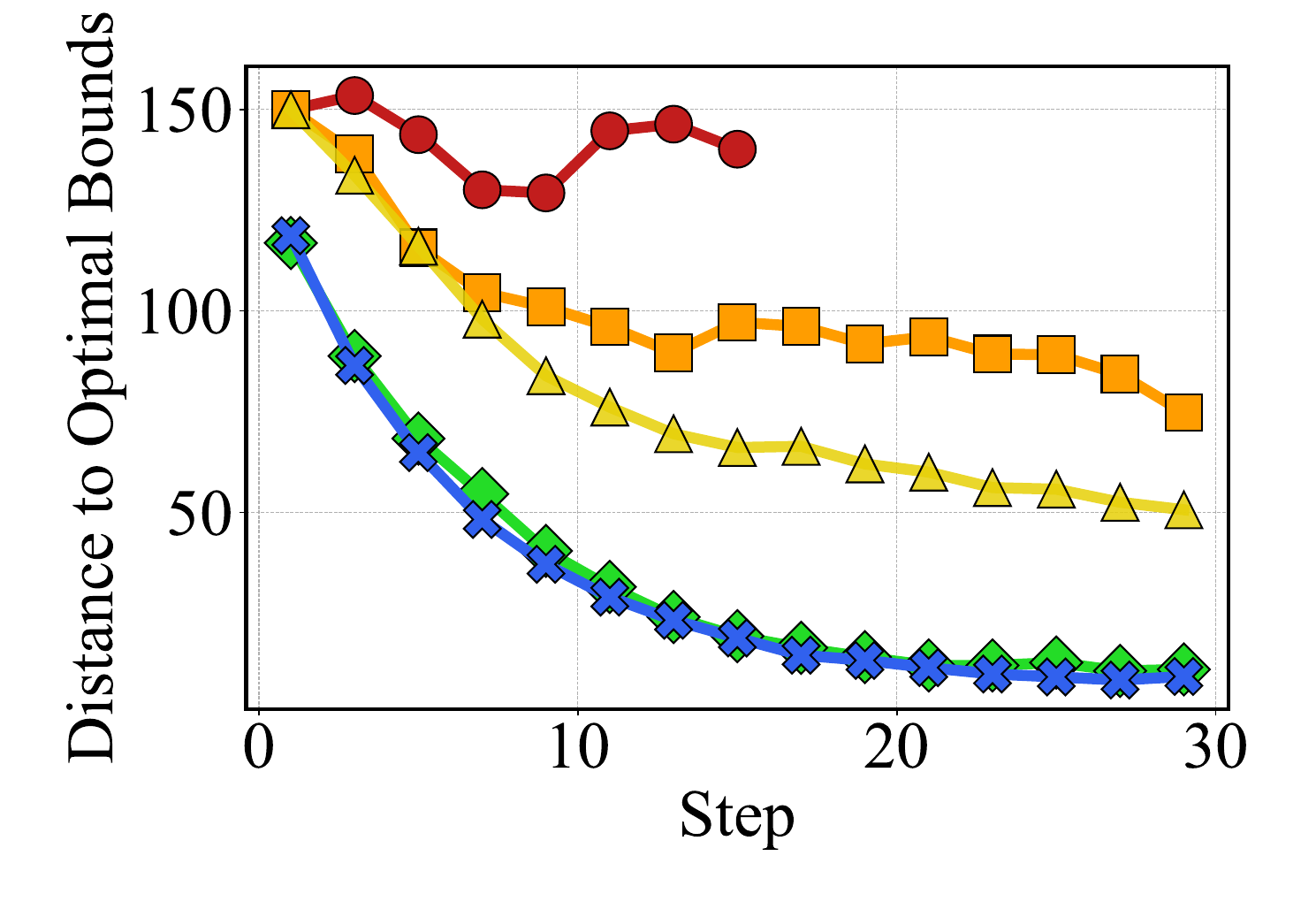}
\caption{Adult}
\end{subfigure}
\begin{subfigure}{0.5\columnwidth}
\centering
\includegraphics[width=\linewidth, trim=0 0 0 0, clip]{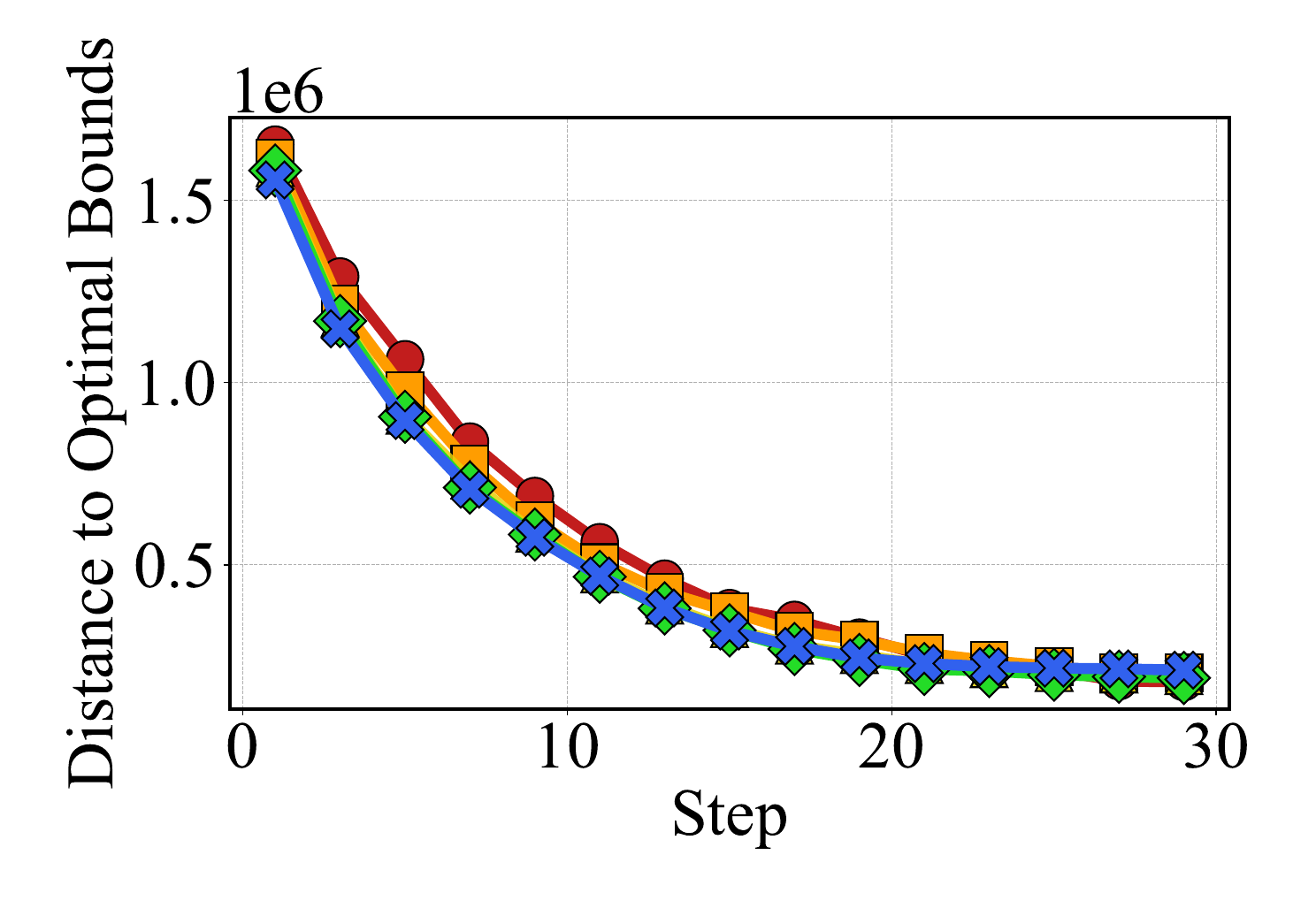}
\caption{Employee}
\end{subfigure}

\centering
\caption{Analysis of the optimality gap for the estimated bounds. We measure the distance to the optimal bounds ($|l - l_{opt}| + |r - r_{opt}|$) over optimization steps. The results demonstrate that our mechanism effectively converges to the optimal bounds, with faster convergence observed at larger privacy budgets ($\epsilon$).}
\label{fig:03-optimal-bounds}
\end{figure*}

\subsection{Distribution Estimation} \label{sec:distribution}
To evaluate how well the mechanisms preserve the underlying data distribution, we employed two complementary metrics: the Wasserstein Distance, which quantifies the minimum cost to transform the estimated distribution into the true one, and the Kolmogorov-Smirnov (KS) statistic, which measures the maximum distance between Cumulative Distribution Functions (CDFs).

Tables \ref{tab:ws_metric} and \ref{tab:ks_metric} present the comparison results. ABC consistently achieves the lowest errors in both Wasserstein and KS metrics across all datasets. Notably, in terms of Wasserstein distance, ABC frequently outperforms baselines by a significant margin, demonstrating its ability to accurately recover both the shape and location of the data mass.

The results further clarify the limitations of DPLAC and DCSGD. While these methods showed competitive performance in mean estimation, they exhibit severe degradation in distribution metrics. This gap is particularly pronounced in the Wasserstein distance (Table \ref{tab:ws_metric}), especially on the Employee dataset. This confirms that their optimization is narrowly focused on point estimates (mean), whereas ABC provides a holistic and robust estimation of the entire data distribution.

\subsection{Convergence to Optimal Bound}
To quantify the effectiveness of our adaptive strategy, we define the \textit{optimality gap} at step $t$. This metric measures the total deviation of the estimated bounds $[l_t, r_t]$ from the true optimal bounds $[l_{opt}, r_{opt}]$:
\begin{equation}
\mathrm{Gap}_t = |l_t - l_{opt}| + |r_t - r_{opt}|
\end{equation}
Figure \ref{fig:03-optimal-bounds} demonstrate a monotonic decrease in the gap across iterations for all settings. This confirms that the estimated bounds consistently converge toward the theoretical optimal bounds. The rate of convergence is positively correlated with $\epsilon$. With a larger $\epsilon$ (lower noise), the gap decreases steeply, allowing the algorithm to rapidly approach the optimal bounds. Notably, for the Adult dataset at $\epsilon=0.5$, the algorithm automatically calibrates the total number of steps based on the calculated noise variance. This adaptive mechanism ensures that the method remains robust even in high-noise regimes, preventing divergence.




\begin{figure*}[ht]
\centering
\begin{subfigure}{0.99\columnwidth}
    \centering
    \includegraphics[width=1.1\linewidth]{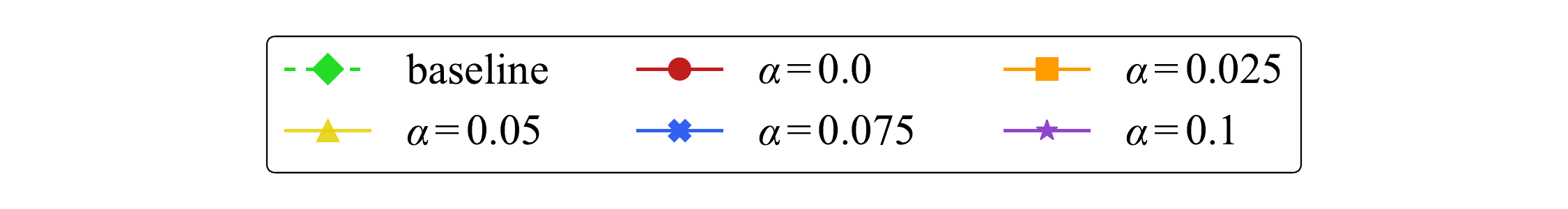}
    \begin{subfigure}{0.49\columnwidth}
        \centering
        \includegraphics[width=\linewidth, trim=0 0 0 0, clip]{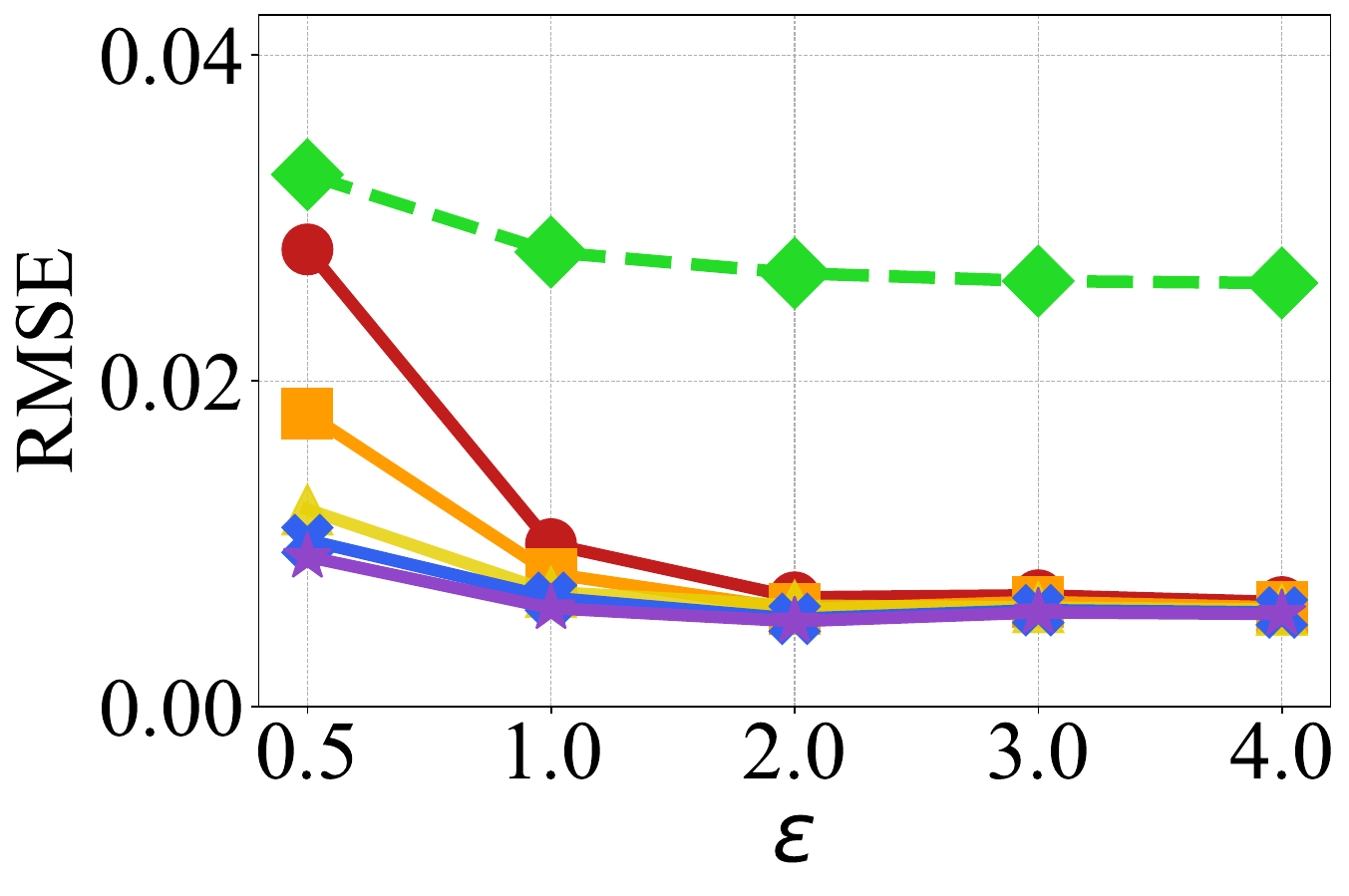}
        \caption{Truncated Normal ($\alpha$)}
        \label{fig:alpha-tn}
    \end{subfigure}
    \hfill
    \begin{subfigure}{0.49\columnwidth}
        \centering
        \includegraphics[width=\linewidth, trim=0 0 0 0, clip]{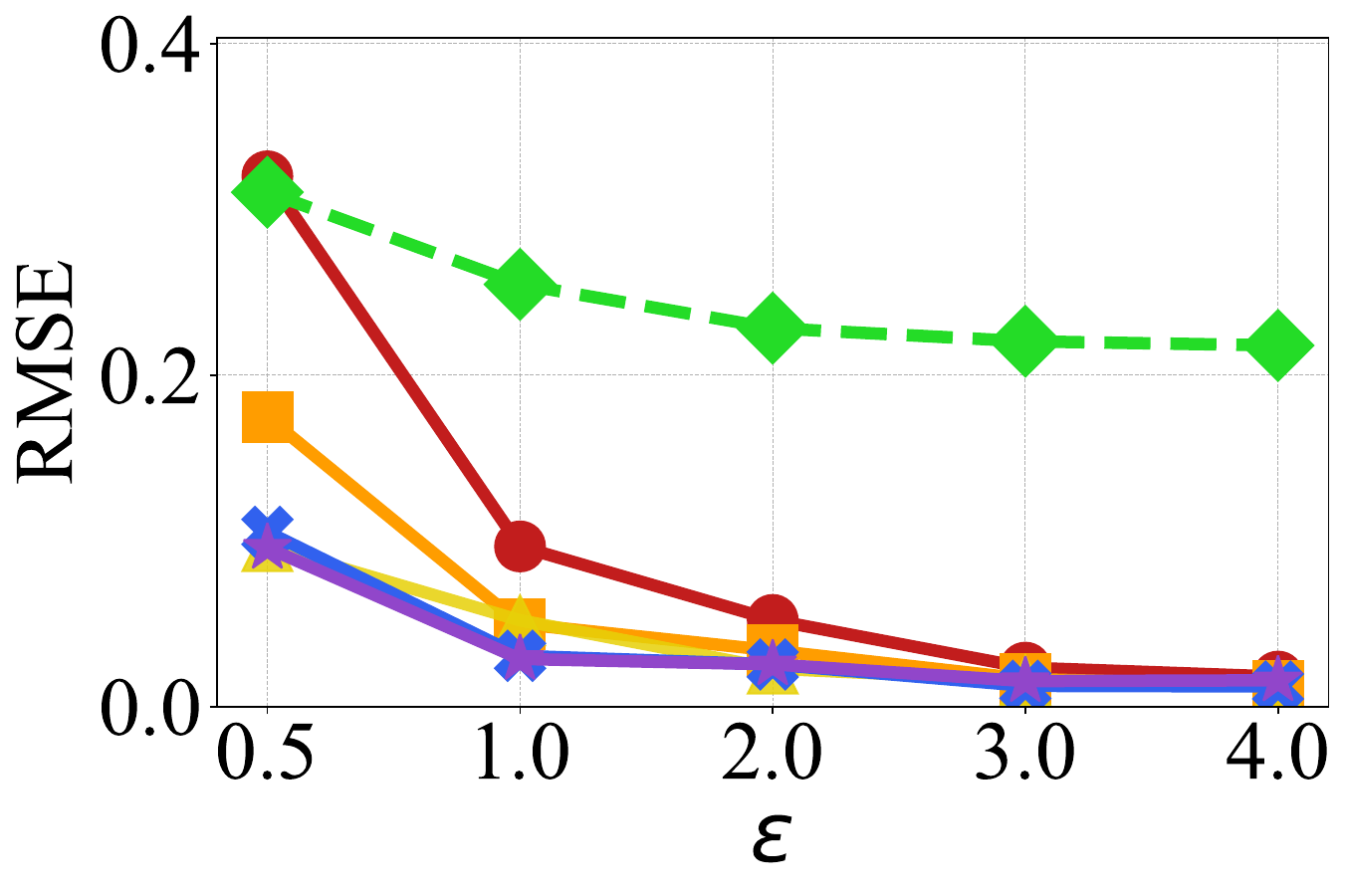}
        \caption{HPC Voltage ($\alpha$)}
        \label{fig:alpha-hpc}
    \end{subfigure}
\end{subfigure}
\hfill
\begin{subfigure}{0.99\columnwidth}
    \centering
    \includegraphics[width=1.1\linewidth]{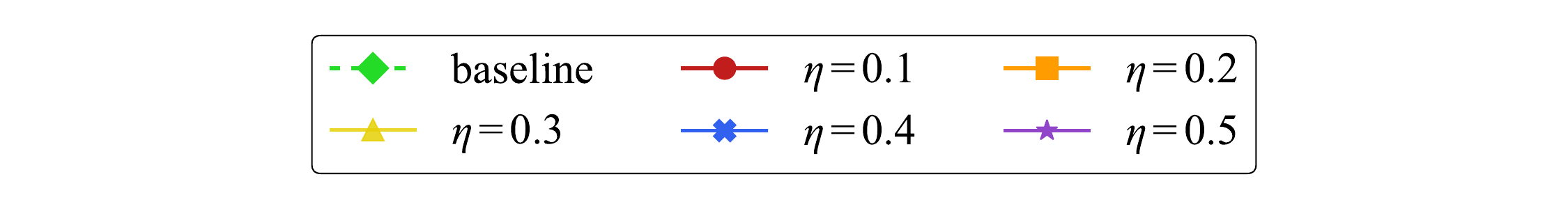}
    \begin{subfigure}{0.49\columnwidth}
        \centering
        \includegraphics[width=\linewidth, trim=0 0 0 0, clip]{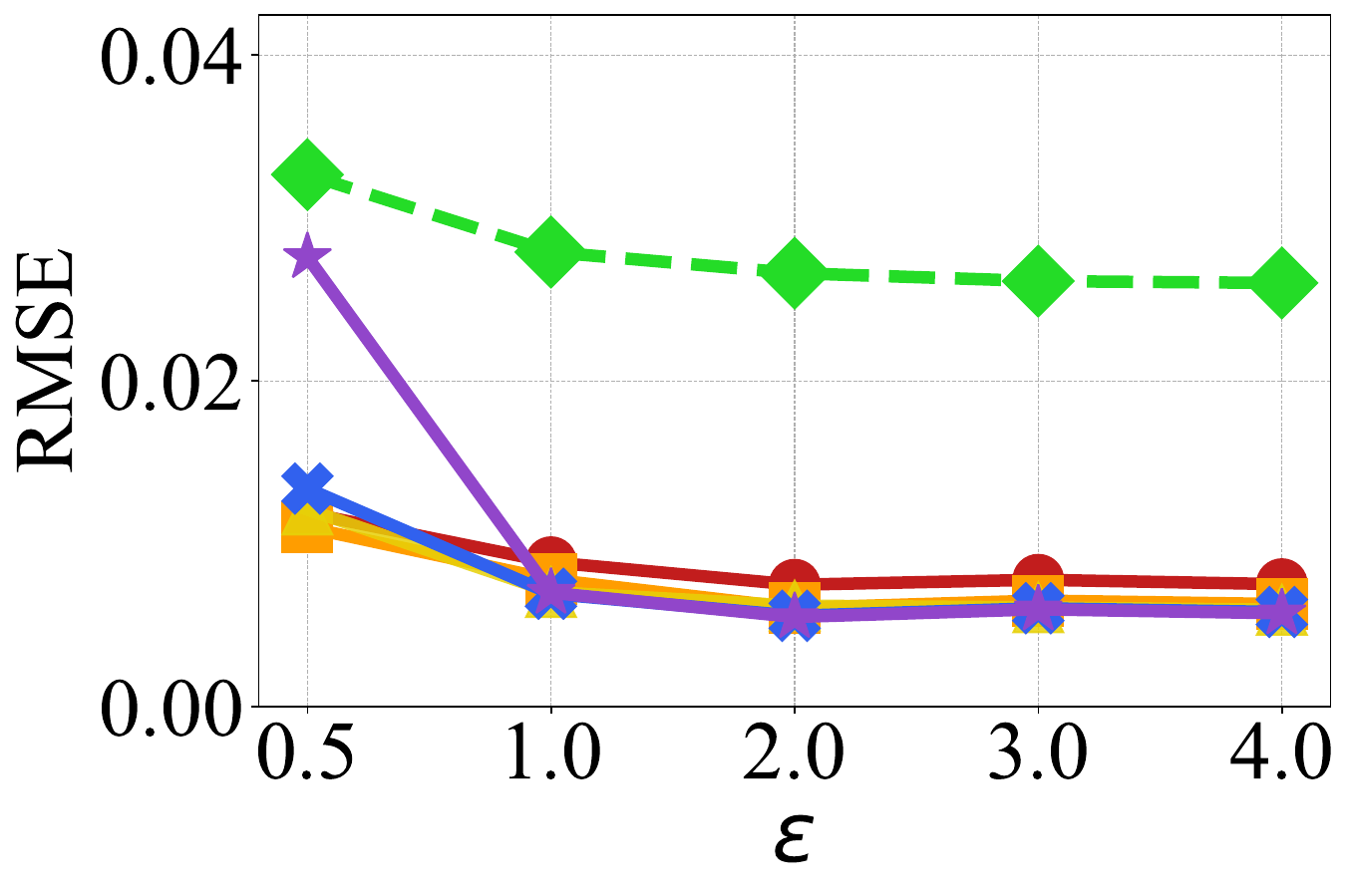}
        \caption{Truncated Normal ($\eta$)}
        \label{fig:lr-tn}
    \end{subfigure}
    \hfill
    \begin{subfigure}{0.49\columnwidth}
        \centering
        \includegraphics[width=\linewidth, trim=0 0 0 0, clip]{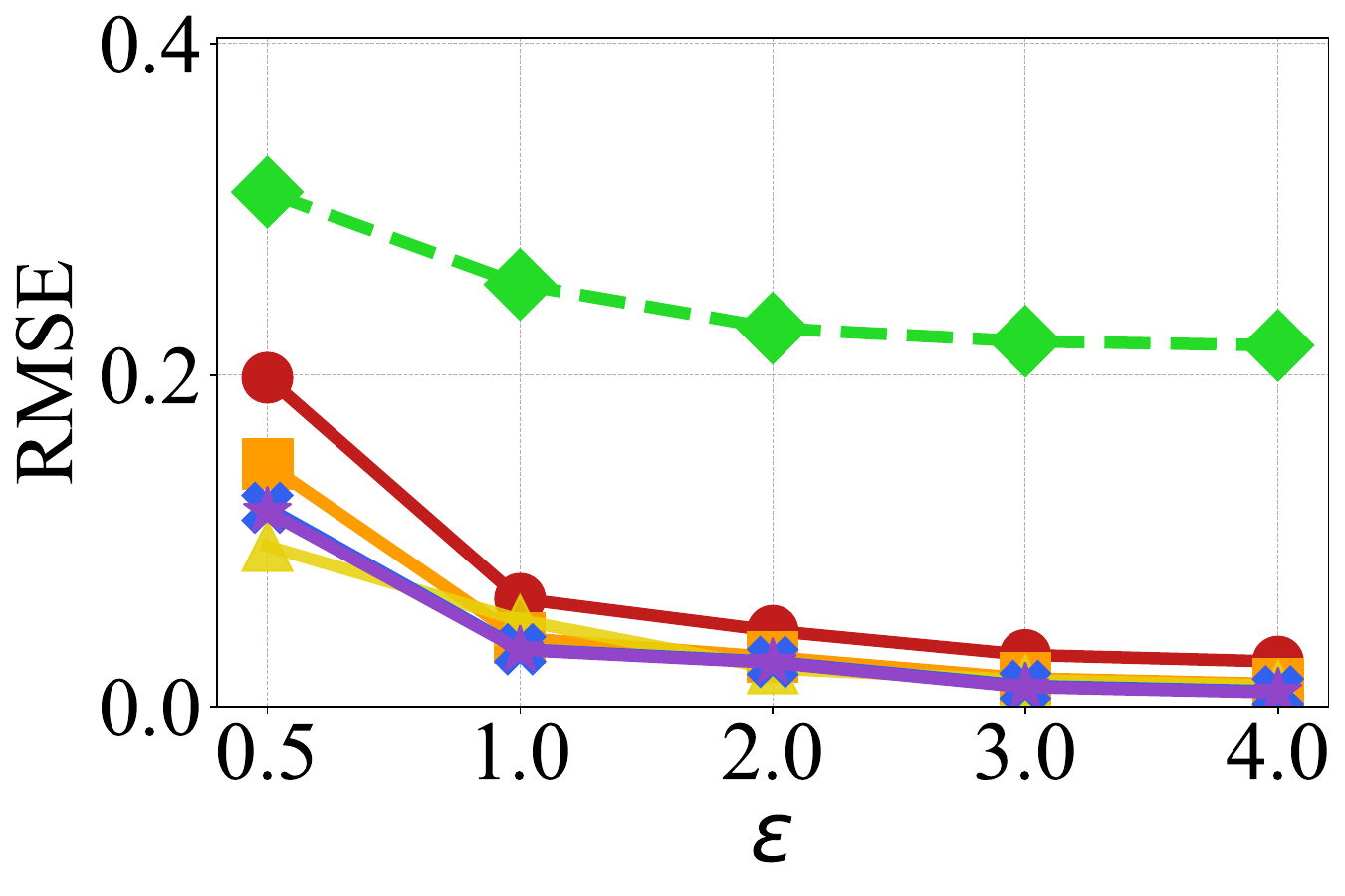}
        \caption{HPC Voltage ($\eta$)}
        \label{fig:lr-hpc}
    \end{subfigure}
\end{subfigure}

\begin{subfigure}{0.99\columnwidth}
    \centering
    \includegraphics[width=1.1\linewidth]{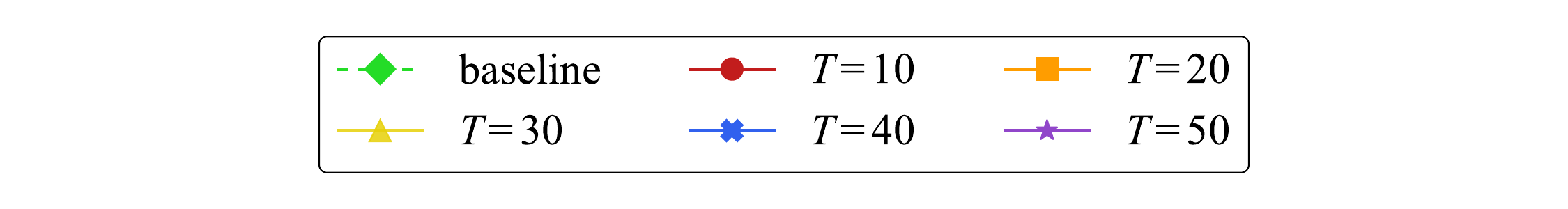}
    \begin{subfigure}{0.49\columnwidth}
        \centering
        \includegraphics[width=\linewidth, trim=0 0 0 0, clip]{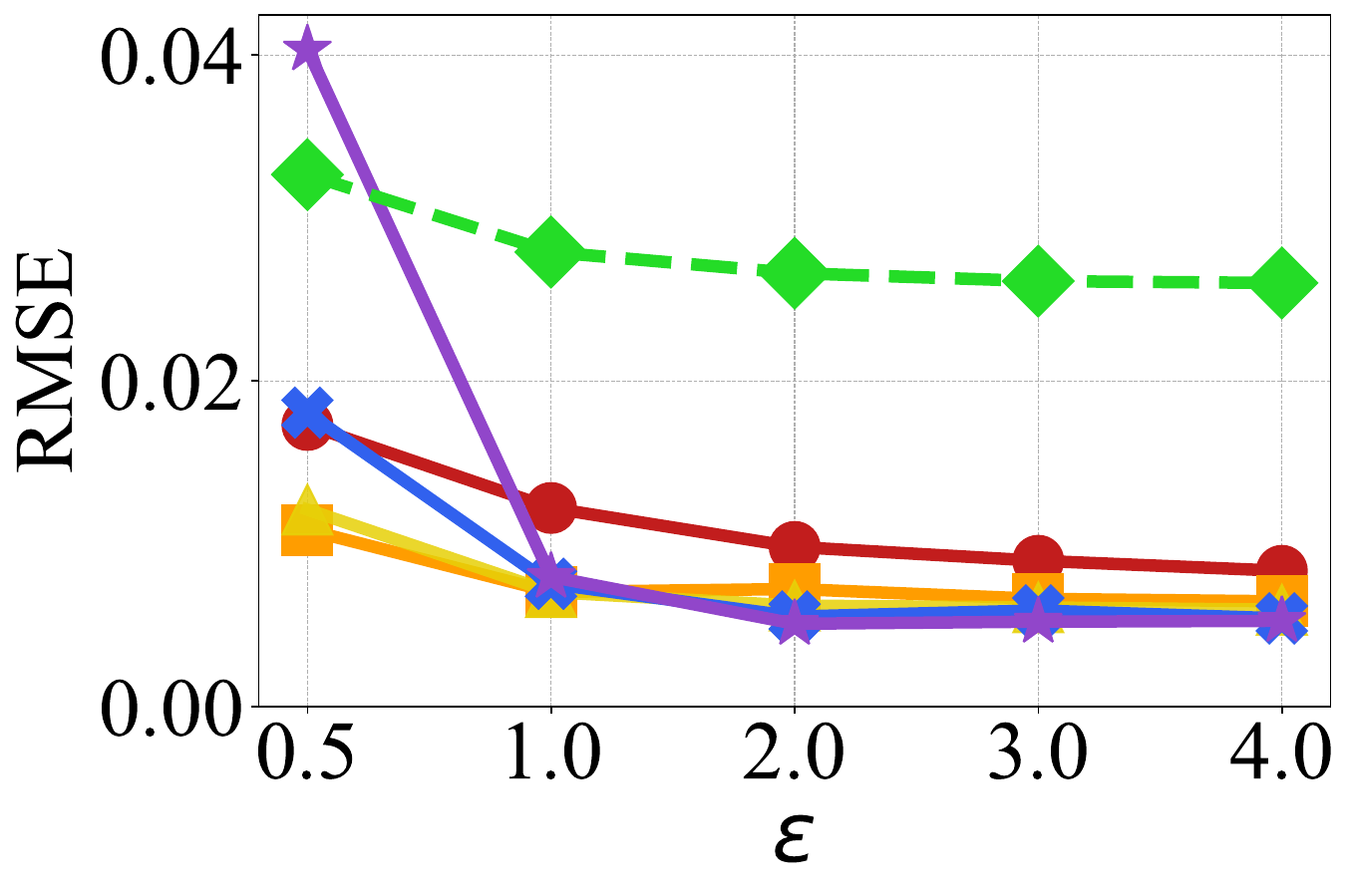}
        \caption{Truncated Normal ($T$)}
        \label{fig:step-tn}
    \end{subfigure}
    \hfill
    \begin{subfigure}{0.49\columnwidth}
        \centering
        \includegraphics[width=\linewidth, trim=0 0 0 0, clip]{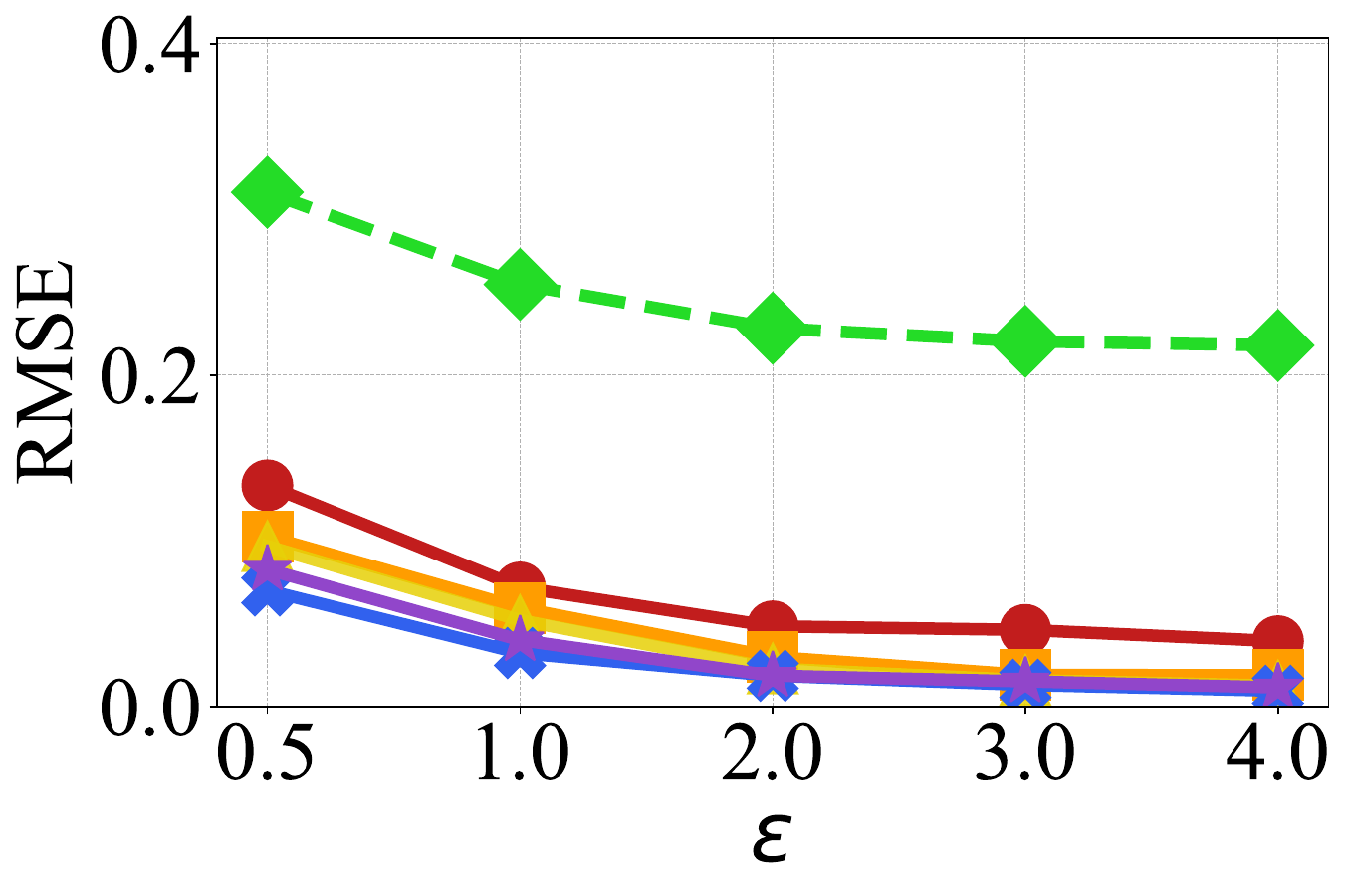}
        \caption{HPC Voltage ($T$)}
        \label{fig:step-hpc}
    \end{subfigure}
\end{subfigure}
\hfill
\begin{subfigure}{0.99\columnwidth}
    \centering
    \includegraphics[width=1.1\linewidth]{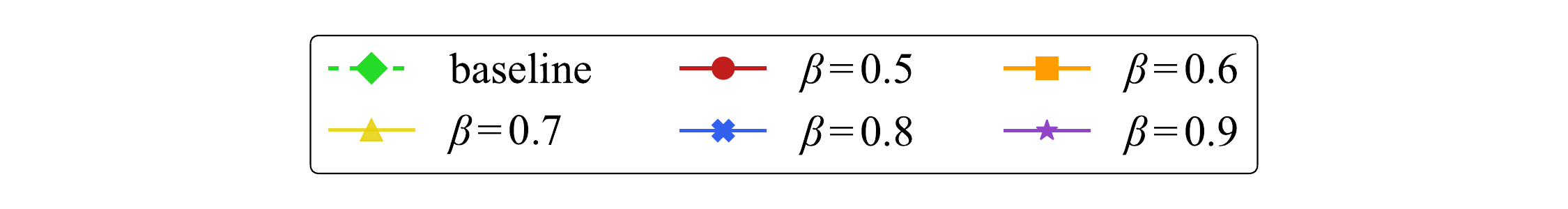}
    \begin{subfigure}{0.49\columnwidth}
        \centering
        \includegraphics[width=\linewidth, trim=0 0 0 0, clip]{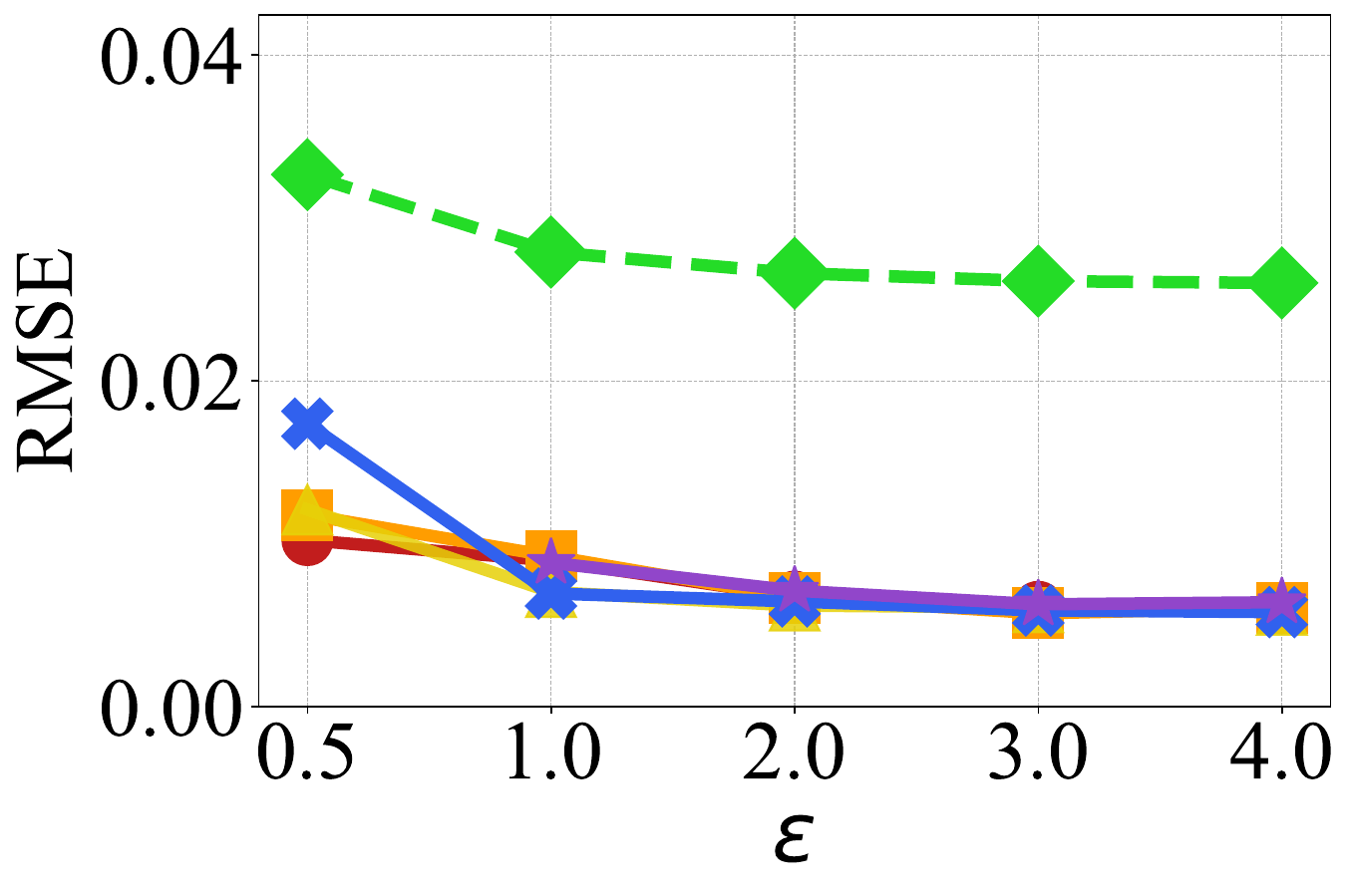}
        \caption{Truncated Normal ($\beta$)}
        \label{fig:beta-tn}
    \end{subfigure}
    \hfill
    \begin{subfigure}{0.49\columnwidth}
        \centering
        \includegraphics[width=\linewidth, trim=0 0 0 0, clip]{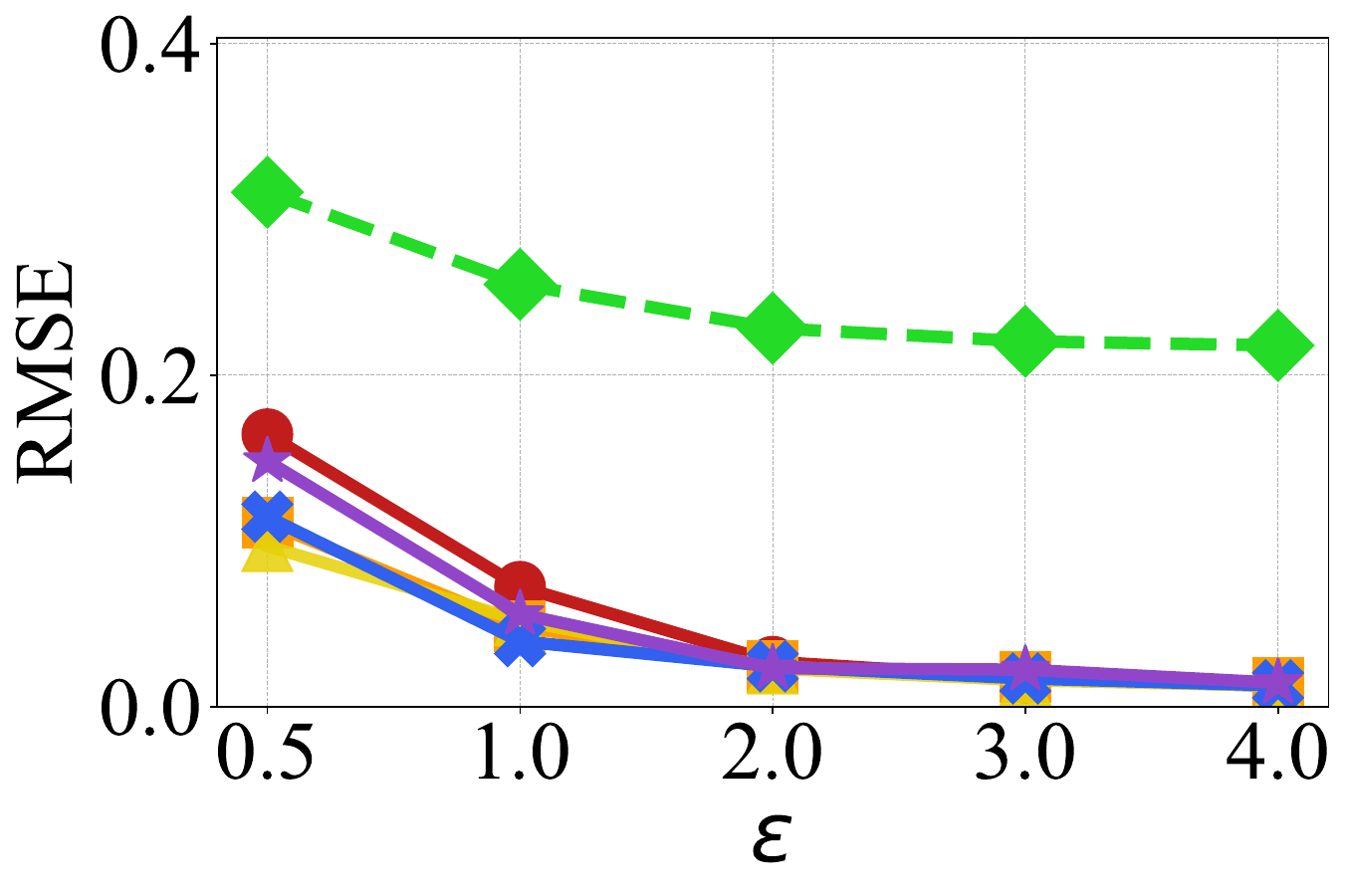}
        \caption{HPC Voltage ($\beta$)}
        \label{fig:beta-hpc}
    \end{subfigure}
\end{subfigure}

\begin{subfigure}{0.99\columnwidth}
    \centering
    \includegraphics[width=1.1\linewidth]{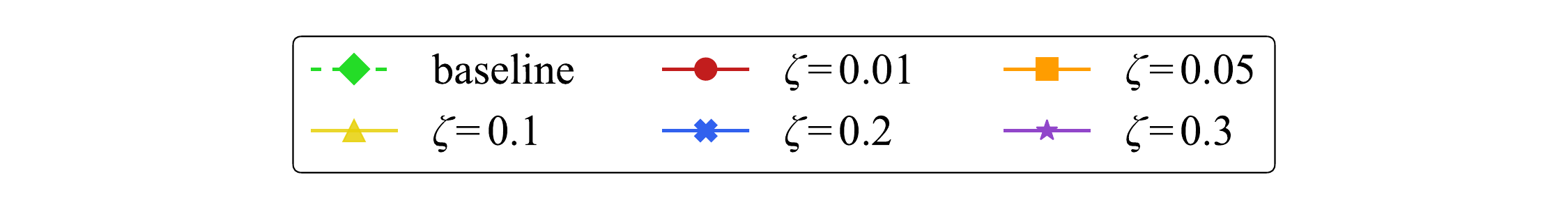}
    \begin{subfigure}{0.49\columnwidth}
        \centering
        \includegraphics[width=\linewidth, trim=0 0 0 0, clip]{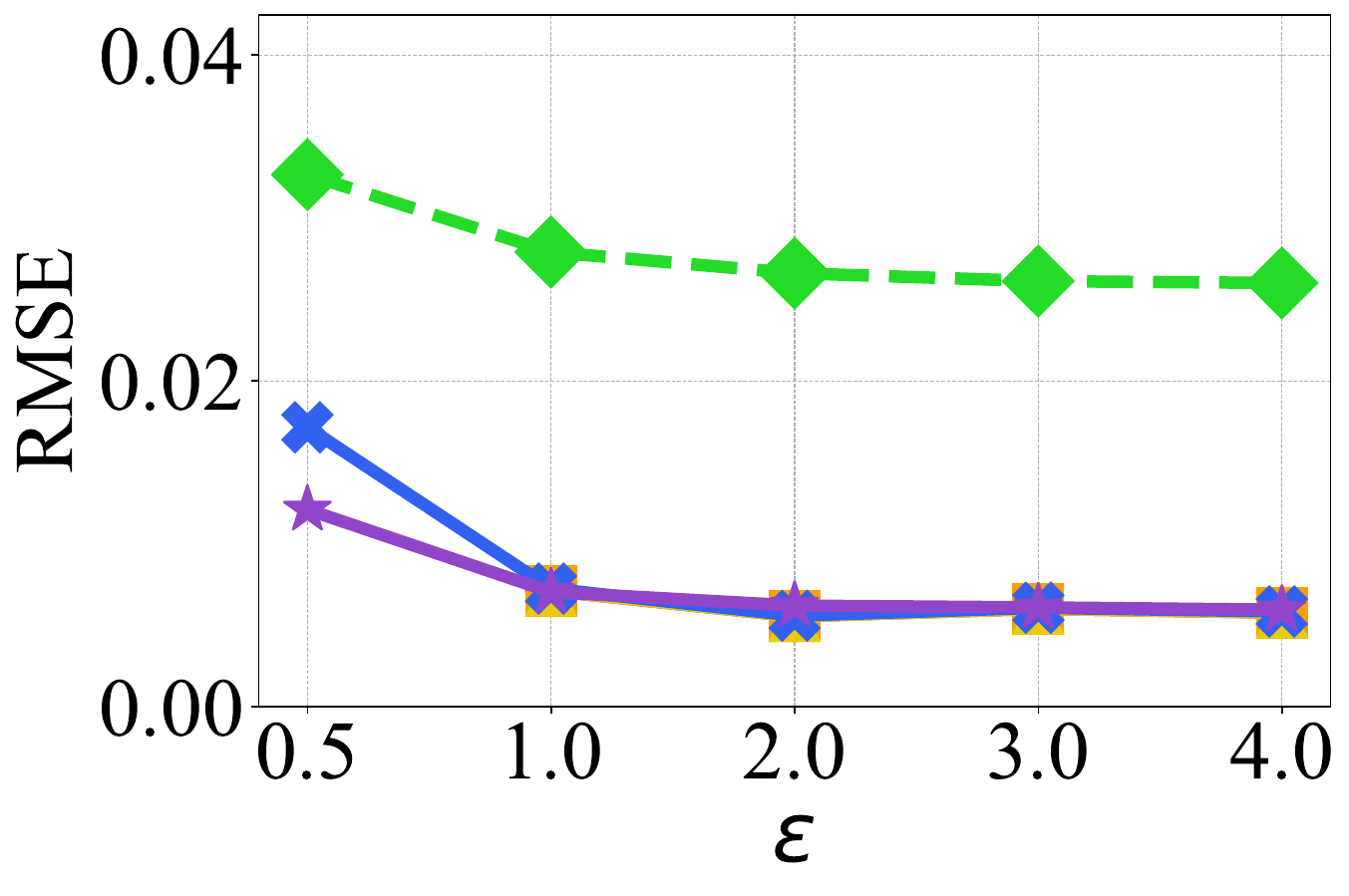}
        \caption{Truncated Normal ($\zeta$)}
        \label{fig:zeta-tn}
    \end{subfigure}
    \hfill
    \begin{subfigure}{0.49\columnwidth}
        \centering
        \includegraphics[width=\linewidth, trim=0 0 0 0, clip]{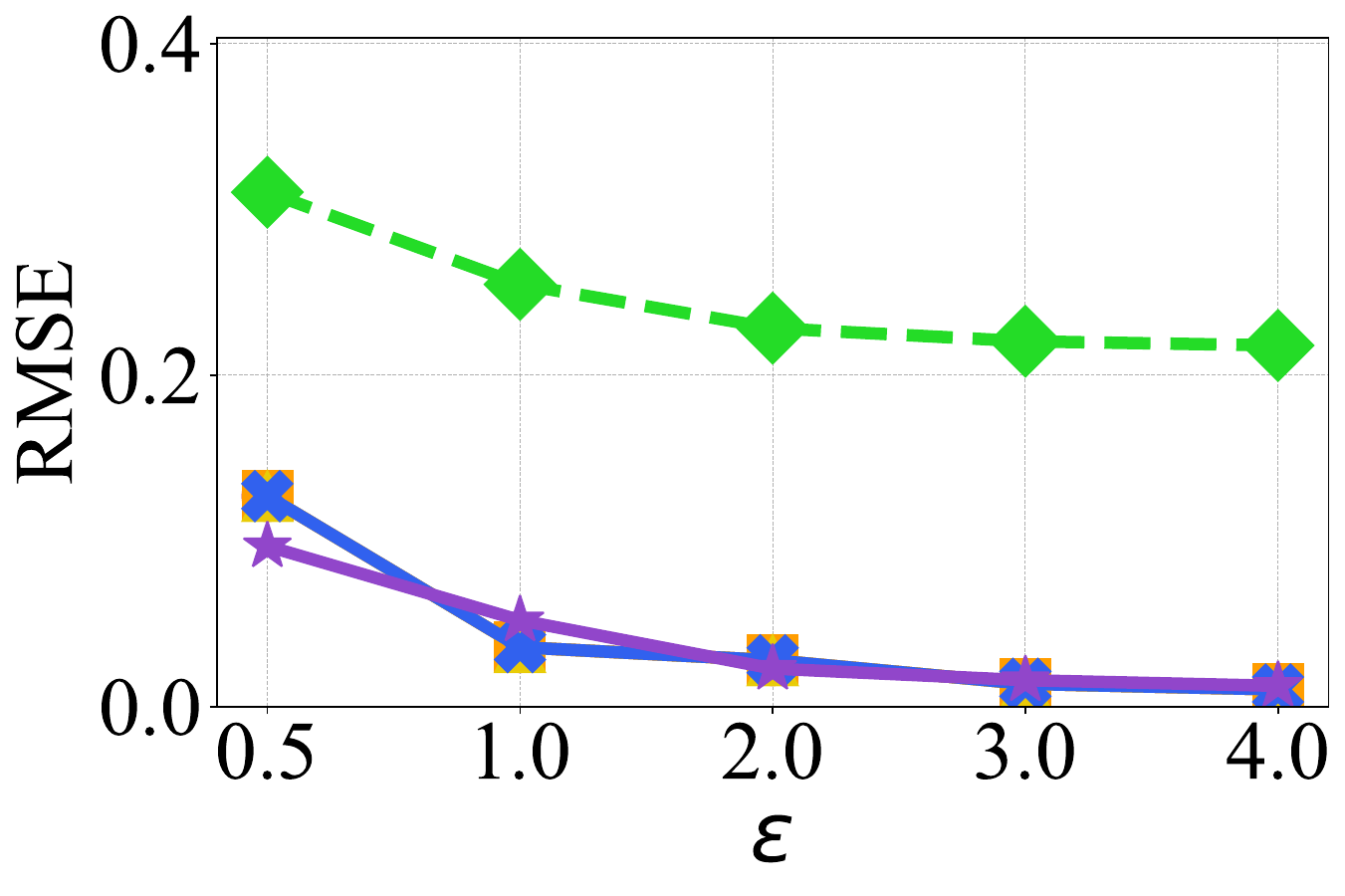}
        \caption{HPC Voltage ($\zeta$)}
        \label{fig:zeta-hpc}
    \end{subfigure}
\end{subfigure}
\hfill
\begin{subfigure}{0.99\columnwidth}
    \centering
    \includegraphics[width=1.1\linewidth]{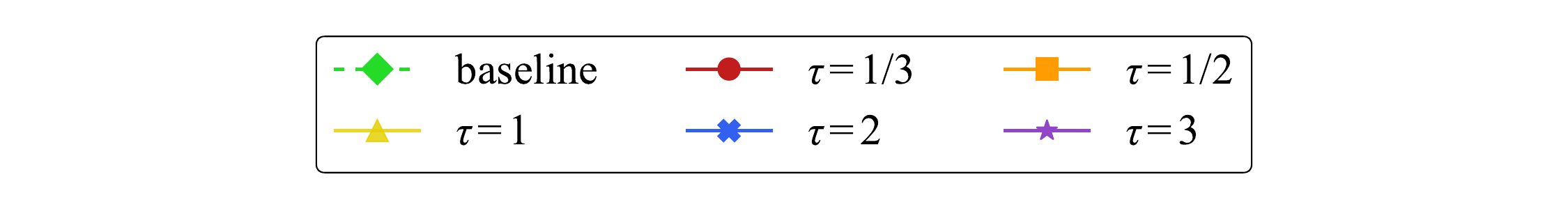}
    \begin{subfigure}{0.49\columnwidth}
        \centering
        \includegraphics[width=\linewidth, trim=0 0 0 0, clip]{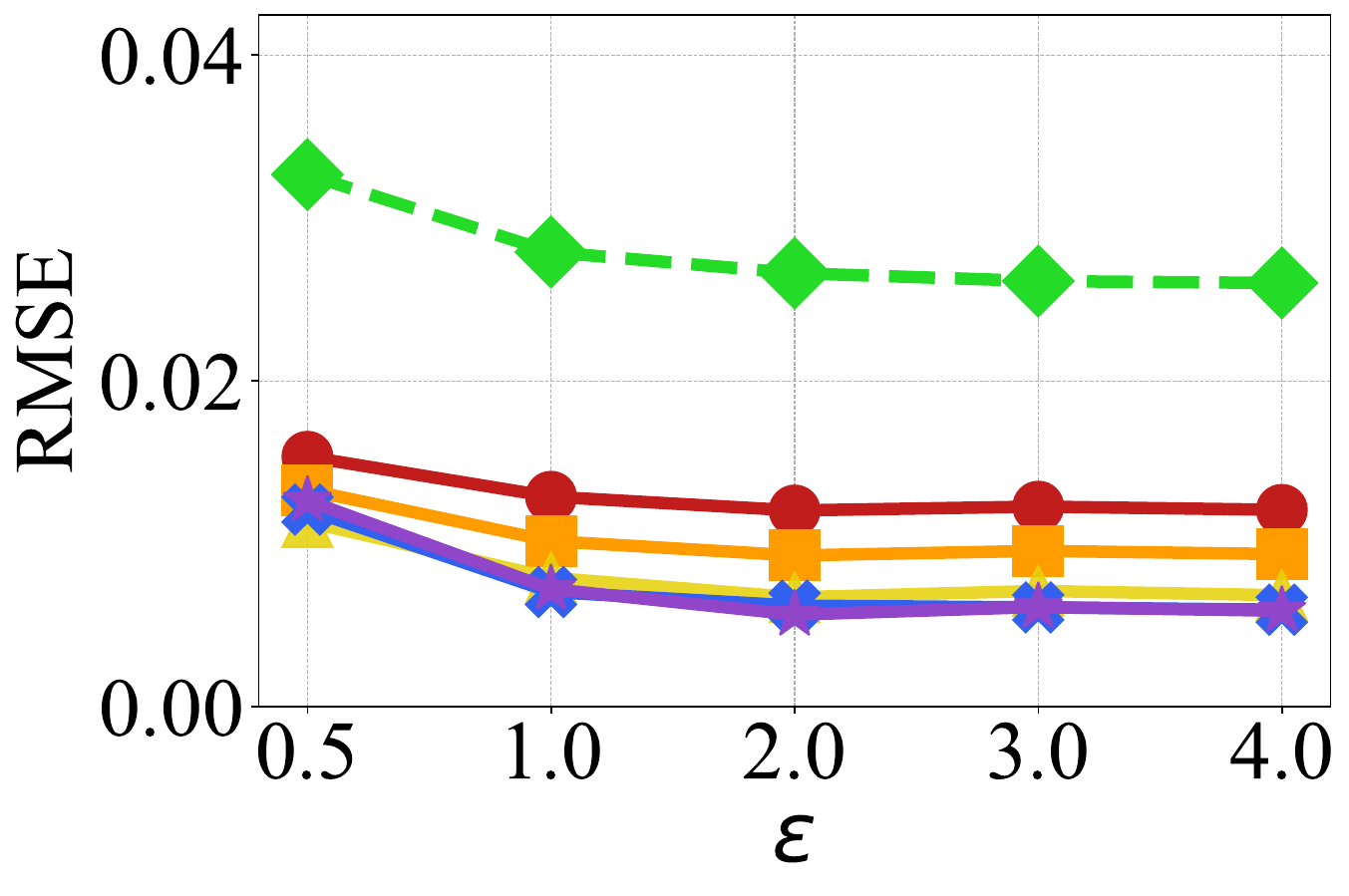}
        \caption{Truncated Normal ($\tau$)}
        \label{fig:tau-tn}
    \end{subfigure}
    \hfill
    \begin{subfigure}{0.49\columnwidth}
        \centering
        \includegraphics[width=\linewidth, trim=0 0 0 0, clip]{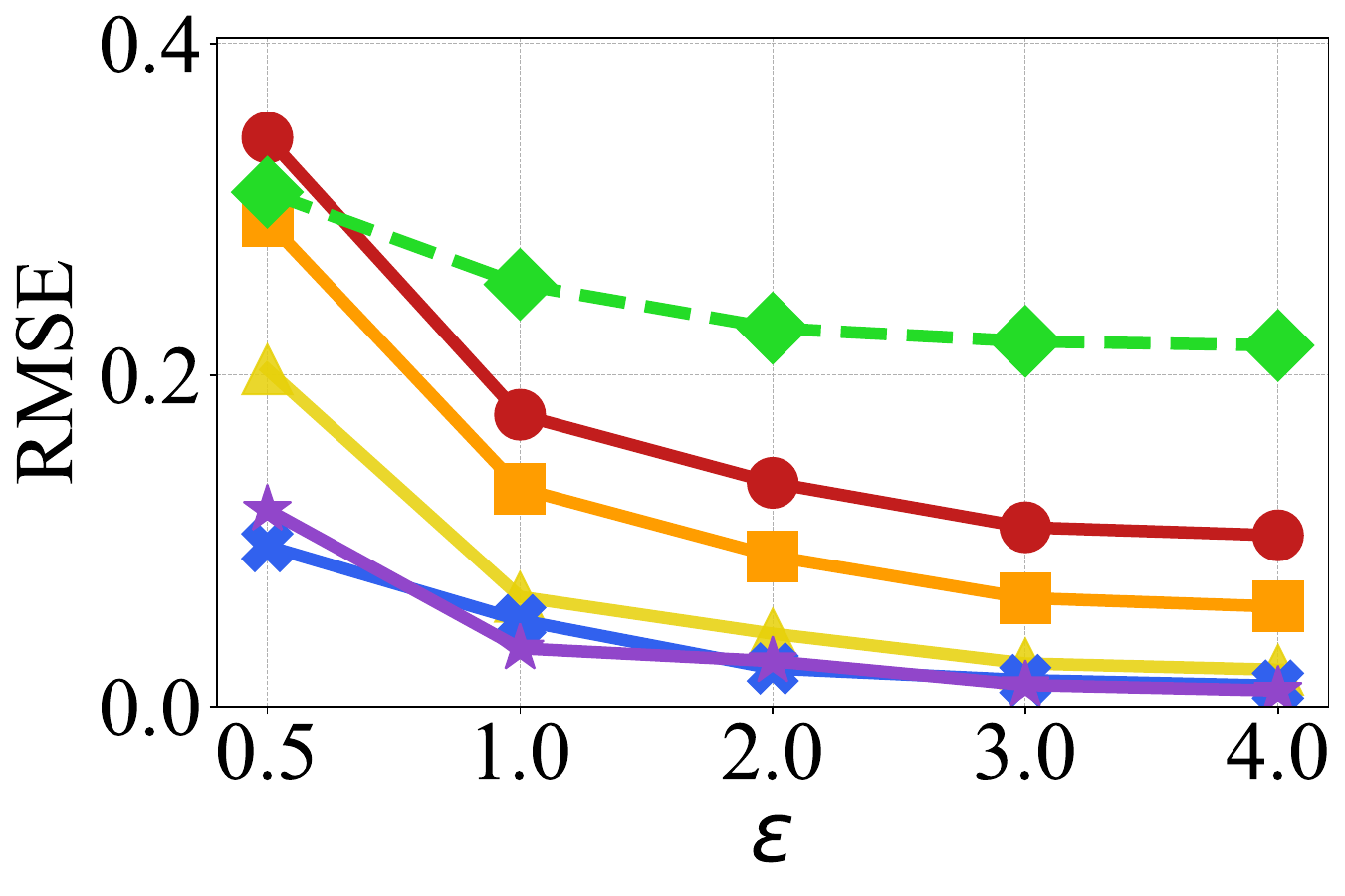}
        \caption{HPC Voltage ($\tau$)}
        \label{fig:tau-hpc}
    \end{subfigure}
\end{subfigure}
\caption{Ablation studies for hyperparameters. RMSE is measured by varying one hyperparameter at a time, comparing the ABC method (solid lines) with the baseline (dashed line). The results demonstrate the robustness of ABC, as it consistently outperforms the baseline across a wide range of settings for all tested hyperparameters.}
\label{fig:ablation-all}
\end{figure*}

\begin{figure}[ht!]
\centering
\begin{subfigure}{0.99\columnwidth}
    \centering
    \includegraphics[width=0.9\linewidth]{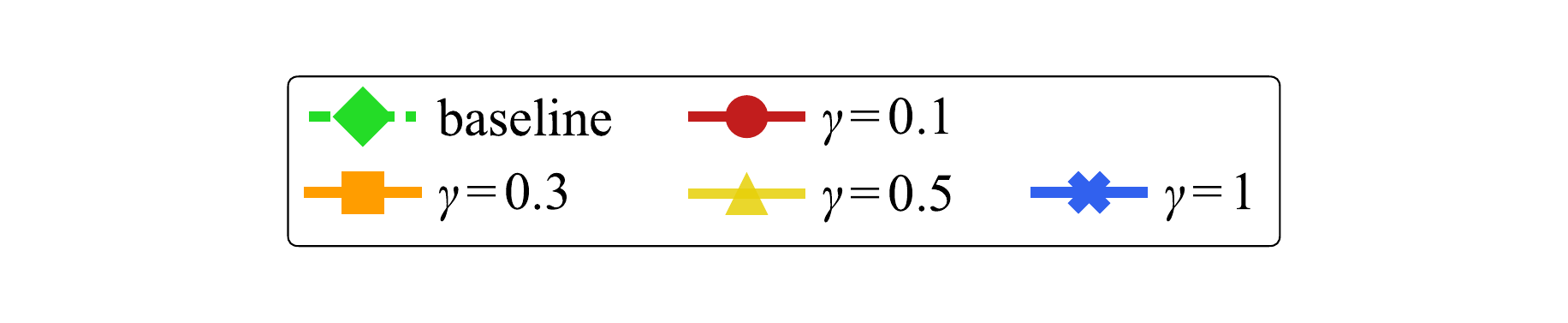}
    \begin{subfigure}{0.49\columnwidth}
        \centering
        \includegraphics[width=\linewidth, trim=0 0 0 0, clip]{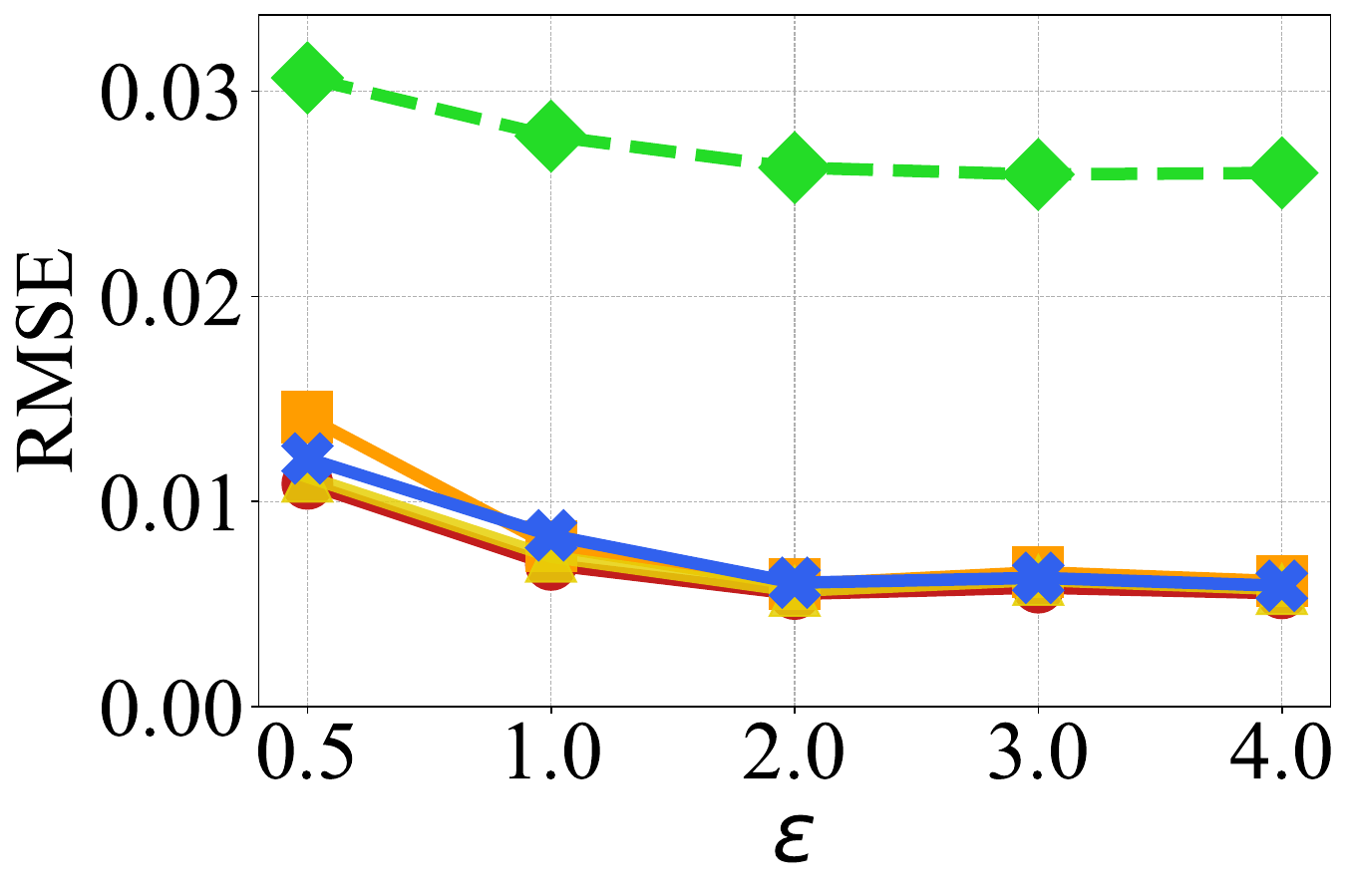}
        \caption{Truncated Normal}
    \end{subfigure}
    \hfill
    \begin{subfigure}{0.49\columnwidth}
        \centering
        \includegraphics[width=\linewidth, trim=0 0 0 0, clip]{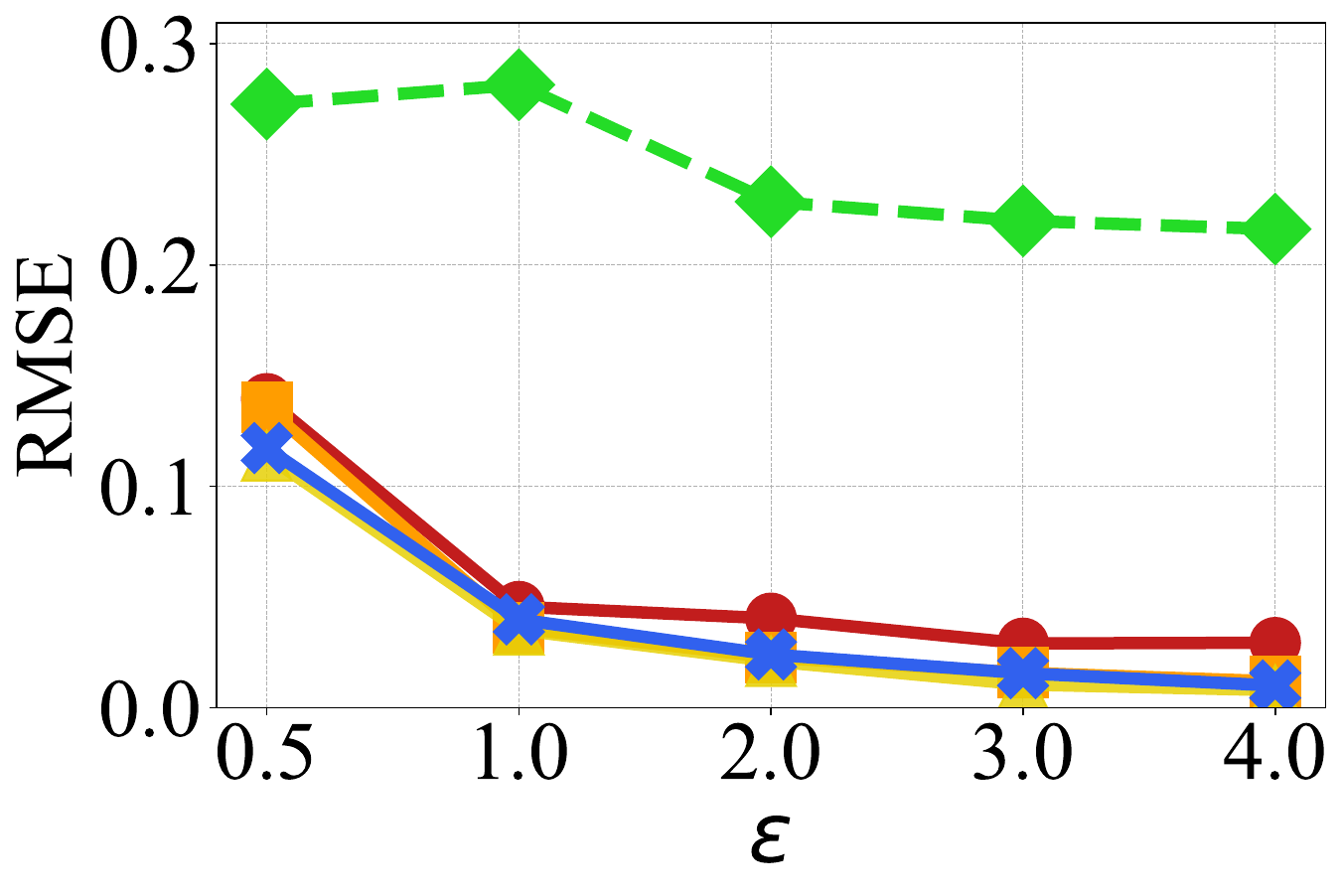}
        \caption{HPC Voltage}
    \end{subfigure}
\end{subfigure}
\caption{Comparison of RMSE between our proposed mechanism under varying non-IID conditions. The results demonstrate that our mechanism achieves significantly lower RMSE compared to the baseline, even in highly non-IID scenarios.}
\label{fig:dirichlet}
\end{figure}

\subsection{Ablation Studies on Hyperparameters} \label{sec:abl}
As the ABC method introduces many hyperparameters, we conduct ablation studies to analyze the robustness to its hyperparameters. Due to page limits and experimental complexity, we limit this study to the two datasets: Truncated Normal (a synthetic dataset) and HPC Voltage (the largest real-world dataset in our experiments). These studies vary one parameter at a time while keeping others fixed to isolate its impact on performance, measured by RMSE using the PM-SUB mechanism.

\subsubsection{Impact of Target Clipping Ratio}
The target clipping ratio, $\alpha$, allows the data collector to control the trade-off between bias and variance. We conducted an ablation study by varying $\alpha$ over the set \{$0, 0.025, 0.05, 0.075, 0.1$\}. The goal of our method is not simply to achieve zero clipping, which would match the true data range $[l^*, r^*]$. For both datasets, setting $\alpha$ to $0$ leads to high variance due to the wide bounds while it still have better performance than baselines. Notably, our ABC method consistently outperforms the baseline for all tested values of $\alpha$. Based on the results, we recommend setting $\alpha$ to a small value, such as between 0.01 and 0.05.

\subsubsection{Impact of Learning Rate}
The learning rate, $\eta$, determines the step size for updating the bounds $l_t$ and $r_t$. To investigate its impact, we varied $\eta$ across the values \{$0.1, 0.2, 0.3, 0.4, 0.5$\}, shown in Figure \ref{fig:lr-tn} and \ref{fig:lr-hpc}.

As shown in the figures, our ABC method consistently outperforms the baseline for all tested values of $\eta$. This shows that the performance gain comes from the adaptive bounding strategy itself, rather than from a single, and our ABC method is not overly sensitive to the choice of learning rate.

\subsubsection{Impact of Number of Rounds}
The hyperparameter $T$ determines the number of rounds into which the total data collection process is divided. For a fixed total population size, $T$ is inversely related to the batch size per round: a larger number of rounds implies smaller data batches per round, while fewer rounds result in larger batches. A larger number of rounds implies more frequent bound adjustments with smaller data batches per round, while a smaller number of rounds means fewer updates with larger batches. To analyze the sensitivity of our ABC method to this parameter, we conducted an ablation study varying the number of rounds across the set \{$10, 20, 30, 40, 50$\}. The results in Figure \ref{fig:step-tn} and \ref{fig:step-hpc} show that ABC outperforms the baseline for all tested values of $T$ except for $\epsilon=0.5$, where it requires smaller rounds due to the noise.

\subsubsection{Impact of Privacy Budget Ratio}
The total privacy budget $\epsilon$ is split by the hyperparameter $\beta$ $(0 < \beta < 1)$. A portion, $\epsilon \cdot \beta$, is used to perturb the numerical data, while the remaining $\epsilon \cdot (1 - \beta)$ protects the clipping signal. A larger $\beta$ enhances data utility, while a smaller $\beta$ improves the accuracy of the clipping estimation for boundary adjustments.

To analyze the impact of this trade-off, we conducted an ablation study by varying $\beta$ across the set \{$0.5, 0.6, 0.7, 0.8, 0.9$\}. The results, shown in Figure \ref{fig:beta-tn} and \ref{fig:beta-hpc}, demonstrate that our ABC method consistently outperforms the baseline for all tested values of $\beta$. The key insight from this experiment is the robustness of our method to the choice of $\beta$. Although there are minor performance variations, the RMSE remains low and stable across a wide range of $\beta$ values. While our method demonstrates robustness to the choice of $\beta$, we recommend a value between 0.7 and 0.8 since the numerical value requires more privacy budget than clipping status.

\subsubsection{Impact of Stability Parameter}
The parameter $\zeta$ is incorporated into our update rule as a small constant to ensure numerical stability. Its primary function is to prevent division-by-zero errors in scenarios where the estimated proportion of non-clipped data is close to zero. Although it is important for robustness, $\zeta$ is designed to have no material impact on the performance of the proposed method.

To confirm this, we performed an ablation study by varying the value of $\zeta$. The results show that the estimation accuracy is insensitive to the choice of $\zeta$, as shown in Figure \ref{fig:tau-tn}, \ref{fig:tau-hpc}. The RMSE remains flat across the entire range of values tested for both datasets. This finding empirically validates that $\zeta$ successfully fulfills its role as a stabilizer without influencing the core dynamics of the ABC method. Its value can be set to small positive constant without consideration for performance tuning, confirming the robustness of our update rule.

\subsubsection{Impact of Amplification Parameter}
As established in Lemma \ref{lem:rate}, this parameter controls the convergence behavior of our algorithm. When $\tau=1$, the algorithm yields linear convergence. When $\tau<1$, it achieves superlinear convergence by amplifying small errors. However, when $\tau>1$, it results in sublinear convergence by dampening error signals. We conducted an ablation study with $\tau \in \{1/3,1/2,1,2,3\}$, and the results are presented in Figure \ref{fig:tau-tn} and \ref{fig:tau-hpc}. The experimental results confirm our theoretical predictions, showing that settings with $\tau < 1$ achieve the lowest RMSE, followed by $\tau=1$. While settings with $\tau > 1$ have the highest error among our configurations, they still significantly outperform the baseline. 

\subsection{Robustness to Non-IID Data}
In practical scenarios, data often arrives with temporal correlations or distributional shifts, violating the Independent and Identically Distributed (IID) assumption. To evaluate the robustness against such irregularities, we conducted experiments using syntheticd non-IID data sequences. 

We simulated non-IID conditions by reordering the dataset based on a Dirichlet distribution. Specifically, the original data was first sorted and partitioned into buckets. We then assigned these buckets to different data batches using a Dirichlet distribution with a concentration parameter $\gamma \in \{0.1, 0.3, 0.5, 1.0\}$.

Figure \ref{fig:dirichlet} presents the RMSE comparison between our method and the baseline across varying $\gamma$ values. The results demonstrate that our ABC mechanism maintains significantly lower RMSE compared to the baseline, even in highly heterogeneous settings ($\gamma=0.1$). This confirms that our adaptive bound estimation strategy effectively prevents the estimation error from diverging even when the input data are non-IID.

\section{Conclusion}

In this paper, we addressed a fundamental limitation of LDP mechanisms for numerical data collection: the requirement of a predefined data domain. This requirement is often impractical in real-world scenarios where no prior knowledge of the data is available, leading to a challenging trade-off between clipping-induced bias and noise-induced variance.

To solve this problem, we proposed the ABC method, a novel and adaptive framework. Our approach enables the server to dynamically learn an appropriate data domain by collecting a privatized signal from users about whether their data was clipped. The core of our method is a principled update rule, grounded in control and information theory, that is proven to converge by Lyapunov stability analysis.

Our extensive experiments demonstrated that the ABC method significantly improves the estimation accuracy of various existing LDP mechanisms across diverse datasets and privacy settings. We also showed that our method is robust to its hyperparameters and intial scales, making it a practical and effective solution. As an orthogonal enhancement, the ABC method can be integrated with current numerical LDP mechanisms, making them more applicable to real-world data collection scenarios where the data domain is unknown.


\clearpage

\section*{AI-Generated Content Acknowledgement}
The authors utilized Google's Gemini 2.5 Pro model for two primary purposes. First, AI-based code generation tools were used to assist in creating Python scripts for plotting the figures in the Experiments section. Second, AI was employed to proofread and refine the language throughout the paper for improved clarity and grammatical correctness. All AI-generated outputs, including code and text suggestions, were reviewed and edited by the authors.

\bibliographystyle{plain}
\bibliography{main}

\end{document}